\newtheorem {theorem} {Theorem}
\newtheorem {definition} [theorem]{Definition}
\newtheorem {proposition} [theorem]{Proposition}
\newtheorem {lemma}  [theorem]{Lemma}
\newtheorem {remark} [theorem]{Remark}
\begin{document}
\title[Liouvillian Propagators, Riccati Eq. \& Diff. Galois Theory]{%
Liouvillian Propagators, Riccati Equation and Differential Galois Theory}
\author{Primitivo Acosta-Hum\'anez}
\address[P. Acosta-Hum\'anez]{Departamento de Matem\'aticas y Estad\'istica
Universidad del Norte Km. 5 via Puerto Colombia, Barranquilla, Colombia}
\email[P. Acosta-Hum\'anez]{pacostahumanez@uninorte.edu.co}
\author{Erwin Suazo}
\address[E. Suazo]{University of Puerto Rico at Mayag\"{u}ez Department of
Mathematical Sciences CALL BOX 9000 Mayag\"{u}ez, Puerto Rico 00681-9018}
\email[E. Suazo]{erwin.suazo@upr.edu}
\subjclass[2010]{Primary 81Q05; Secondary 12H05}
\dedicatory{To Victoria and Joaqu\'{\i}n, propagators of our happiness}

\begin{abstract}
In this paper a Galoisian approach to build propagators through Riccati
equations is presented. The main result corresponds to the relationship
between the Galois integrability of the linear Schr\"odinger equation and the
virtual solvability of the differential Galois group of its associated
characteristic equation. As main application of this approach we solve the Ince's differential equation through Hamiltonian Algebrization procedure and Kovacic Algorithm to find the propagator for a
generalized harmonic oscillator that has applications describing the process
of degenerate parametric amplification in quantum optics and the description of 
the light propagation in a nonlinear anisotropic waveguide. Toy models of propagators inspired by integrable Riccati equations and integrable characteristic equations are also presented. 
\end{abstract}

\keywords{Differential Galois group, Green functions, propagators, Riccati
equation}
\maketitle

\section{Introduction}

The generalized harmonic oscillator has attracted considerable attention
over many years in view of their great importance to several advanced
quantum problems including Berry's phase, quantization of mechanical of
systems and more (see \cite{cosuin} and references therein). The fact that
in quantum electrodynamics the electromagnetic field can be represented as a
set of forced harmonic oscillators makes quadratic Hamiltonians of special
interest \cite{cosusu1, cosusu2, DoMaMa, Hannay, su1, YeLeePan}. A method to
construct explicit propagators for the linear Schr\"{o}dinger equation with
a time-dependent quadratic Hamiltonian based in solutions of the Riccati
equation has been presented in \cite{colosusu, cosusu1}, and as particular
cases the propagators for the free particle, the harmonic oscillator,
Caldirola-Kanai and more can be found in a unified manner \cite{colosusu,
cosusu1, LaSu}. Cauchy problem with this approach have been studied in \cite%
{LoSu, MeCoSu}.

There has also been an increase interest in the study of Picard-Vessiot
Theory, also known as the Galois theory for ordinary linear differential
equations, where differential equations of such type are analysed throughout
their Galoisian structure. This Galoisian structure depends on the nature of
the solutions of the differential equation; for instance, one obtains some
kind of solvability (virtual) for the \emph{Galois group} whenever one
obtains \emph{Liouvillian solutions}, and in this case one says that the
differential equation is \emph{integrable}. This means for example that when
one obtains Airy functions, the differential equation is not integrable,
while when one obtains Jacobi elliptic functions, the differential equation
is integrable, and for instance one gets virtual solvability of its Galois
group.

In this paper we present  a Galoisian approach of  how to find explicit
propagators through Liouvillian solutions for linear second order
differential equations associated to Riccati equations. The main application of this Galoisian approach,  for instance a main result of
the paper, is the construction of the propagator for the so called degenerate
parametric oscillator:

\begin{eqnarray}
i\partial _{t}\psi  &=&H(t)\psi   \label{SEsu1} \\
H(t) &=&\frac{1}{2m}\left( 1+\frac{\lambda }{\omega }\cos (2\omega t)\right)
p^{2}+\frac{m\omega ^{2}}{2}\left( 1-\frac{\lambda }{\omega }\cos (2\omega
t)\right) x^{2}  \label{SEsu2} \\
&&+\frac{\lambda }{2}\sin (2\omega t)(px+xp),\quad p=-i\partial _{x}.  \notag
\end{eqnarray}

Another form for $H(t)$ in terms of annihilation and creation operators, $%
\hat{a}=\sqrt{1/2\omega }\left( \omega x+ip\right) ,\quad \hat{a}^{+}=\sqrt{%
1/2\omega }(\omega x-ip)$ with $\hat{a}\hat{a}^{+}-\hat{a}^{+}\hat{a}=1,$ in
the coordinate representation is:%
\begin{equation}
H(t)=\frac{\omega }{2}\left( \hat{a}\hat{a}^{+}+\hat{a}^{+}\hat{a}\right) -%
\frac{\lambda }{2}\left( e^{2i\omega t}\hat{a}^{2}-e^{-2i\omega t}(\hat{a}%
^{+})^{2}\right) .
\end{equation}

In quantum optics the first term corresponds to the self-energy of the
oscillator representing the mode of interest, and the second term describes
the coupling of the classical pump to that mode giving rise to the
parametric amplification process ($\lambda $ is the phenomenological
constant) \cite{cosuin}. The oscillator (\ref{SEsu1})-(\ref{SEsu2}) might be
introduced for the first time by Takahasi \cite{Taka} in order to describe
the process of degenerate parametric amplification in quantum optics (see
also \cite{LiFaXi, Lo1, Loui2, Loui3, Orzag, Shen, Taka}). The Hamiltonian (%
\ref{SEsu2}) had also been considered by Angelow and Trifonov \cite{AnTri95,
AnTri98} in order to describe the light propagation in a nonlinear
anisotropic waveguide. 

In  \cite{cosuin} the authors motivated the investigation of properties of
the degenerate parametric oscillator (a particular case of the generalized
harmonic oscillator) including a systematic study of corresponding
non-periodics solutions of Ince's equation that seems to be missing in the
mathematical literature (see section 8 in  \cite{cosuin}). In \cite{cosuin}
the authors constructed the propagator for (\ref{SEsu1})-(\ref{SEsu2}) for
the case $\lambda =\omega =1$ using non-periodic solutions of Ince's
equation (compare with classical results of periodic solutions \cite{MaWi});

In this paper using Galoisian theory we show how to find the explicit
solution of the associated Ince's equation presented in \cite{cosuin} for
the special case $\lambda =\omega $ (those authors don't show how they found
the explicit solution, see equations (6.1), (6.2) and (6.3)). In fact, they
present a non-periodic solution that allows us to write the propagator
explicitly; the fact of the solution being non-periodic is fundamental. In
this note we present in detail how to find this non-periodic solution by
using a combination of kovacic algorithm \cite{kov} and an algebrization
procedure (see \cite{ac,acmowe})$,$ and further we find the explicit
solution for the general case  $\lambda \neq \omega $ (see (3.4) in \cite%
{cosuin} and discussion below of it)$.$ We believe this approach can be
extended to the study of propagators of other generalized harmonic
oscillators, but here we restrict ourselves to (\ref{SEsu1})-(\ref{SEsu2})
and give some toy examples in Section \ref{toyexamples}.

The aim of this paper is to establish a Galoisian approach to the techniques
given by Suslov et.al., see \cite{colosusu,cosusu1, cosusu2,cosu}. To study
Liouvillian solutions for linear second order differential equations, as
well the integrability of their associated Riccati equations, we use Kovacic
algorithm (see \cite{kov}) and an algebrization procedure (see \cite%
{ac,acmowe}). These tools were applied to study differential equations
incoming from physics; in particular the integrability analysis of the
one-dimensional linear Schr\"{o}dinger equation has been studied in \cite%
{ac,acmowe}.

This paper is organized in the following way:

\begin{itemize}
\item Section 2 contains a brief description of the basic theory concerning
of construction of explicit propagators using the references \cite{colosusu,
cosusu1, cosusu2, cosu, cosuin, susu1, susu2, su1, su2}, and a short summary
of the Picard-Vessiot theory is also presented, which was written according
to \cite{ac,aclamopa,vasi}.

\item Section \ref{Galois_prop} contains the one of the main results of this paper; it corresponds to a Galoisian approach of propagators. It is devoted to a theoretical Galoisian approach to propagators starting
with Riccati and second order differential equations. The result given here
relates the Galois integrability of the linear Schr\"{o}dinger equation with
the virtual solvability of the differential Galois group of its associated
characteristic equation.

\item Section \ref{newpropagators} contains the another main result of this paper, where  is presented the Galoisian
analysis of Ince's differential equations to do the construction of the propagator of the degenerate parametric oscillator in more general terms.   Also, we compute the differential Galois
group associated to such propagator, which corresponds to the differential
Galois group of the Ince's characteristic equation.

\item Section \ref{toyexamples} contains some toy models of 
new propagators, as well their Green functions through characteristic
equations and Riccati equations. 
\end{itemize}

For suitability, along this paper $\partial _{x}$ denotes $\frac{\partial }{%
\partial x}$, for higher order derivation $\partial _{x}^{n}$ denotes $\frac{%
\partial ^{n}}{\partial x^{n}}$ and by $\partial_ta(0)$ we mean $%
\partial_ta(t)|_{t=0}$.

\section{Theoretical Background}\label{thback}

\subsection{Differential Galois theory}

\label{DGT} The Galois theory of differential
equations, also called Differential Galois Theory and Picard-Vessiot Theory, has been developed by
Picard, Vessiot, Kolchin and currently by a lot of researchers, see \cite{aclamopa,acmowe,ka,kol,kov,mara,vasi}. Moreover, recent applications to mathematical physics can be found in \cite{ac,acmowe,mo,mora,stpr}. We consider the differential Galois theory in the context of second order linear differential equations.

A \emph{differential field} $K$ is a commutative field of
characteristic zero with a \emph{derivation}  $\partial
_{x}$, where the \emph{field of constants} of $K$, denoted by $\mathcal{C}_K$, is  algebraically
closed and of characteristic zero.
The \emph{coefficient field} for a differential equation is defined
as the smallest differential field containing all the coefficients of the
equation.
Let $L$ be a differential field
containing $K$, we say that $L$ is a \emph{%
Picard-Vessiot} extension of $K$ if there exist two
linearly independent $y_{1},y_{2}\in L$ such that
$L=K\langle y_{1},y_{2}\rangle $ and $\mathcal C_L=\mathcal C_K$.
A $K$-automorphism $\sigma $ of the
Picard-Vessiot extension $L$ is called a \emph{differential automorphism} if $\sigma
(\partial _{x}a)=\partial _{x}(\sigma (a))$ for all $a\in L$ and $\forall
a\in K,$ $\sigma (a)=a$. The group of all differential automorphisms of $L$
over $K$ is called the \textit{differential Galois group} of $L$ over $K$
and is denoted by $\mathrm{DGal}(L/K)$. 
\begin{theorem}\cite{kol}
The differential Galois group $\mathrm{DGal}(L/K)$ is an algebraic
subgroup of $\mathrm{GL}(2,\mathbb{C})$.
\end{theorem}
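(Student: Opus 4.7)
The plan is to realize $\mathrm{DGal}(L/K)$ first as a faithful group of linear operators on the two-dimensional solution space, then argue that this image is cut out by polynomial equations in the matrix entries. Concretely, let $\partial_x^2 y + p\,\partial_x y + q\,y = 0$ with $p,q\in K$ be the second order linear differential equation whose Picard--Vessiot extension is $L=K\langle y_1,y_2\rangle$, and set
\[
V \;=\; \{\, y\in L : \partial_x^2 y + p\,\partial_x y + q\,y = 0 \,\}.
\]
Since the solution space of an order $n$ linear ODE over a differential field has dimension at most $n$ over the field of constants, and since by definition $\mathcal C_L=\mathcal C_K=\mathbb C$, I get $V=\mathbb C\,y_1\oplus\mathbb C\,y_2$, a two-dimensional $\mathbb C$-vector space.

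Next I would verify that the action of $\mathrm{DGal}(L/K)$ restricts to $V$. For any $\sigma\in\mathrm{DGal}(L/K)$, the relations $\sigma\partial_x=\partial_x\sigma$ and $\sigma|_K=\mathrm{id}$ give
\[
\partial_x^2(\sigma y_i) + p\,\partial_x(\sigma y_i) + q\,(\sigma y_i) \;=\; \sigma\bigl(\partial_x^2 y_i + p\,\partial_x y_i + q\,y_i\bigr) \;=\; 0,
\]
so $\sigma(y_i)\in V$. Hence $\sigma$ restricts to a $\mathbb C$-linear automorphism of $V$, yielding the representation
\[
\rho:\mathrm{DGal}(L/K)\longrightarrow \mathrm{GL}(V)\cong \mathrm{GL}(2,\mathbb C).
\]
Injectivity of $\rho$ follows because $L=K(y_1,y_2,\partial_x y_1,\partial_x y_2)$, and once $\sigma(y_1),\sigma(y_2)$ are prescribed, the commutation with $\partial_x$ and the condition $\sigma|_K=\mathrm{id}$ determine $\sigma$ on all of $L$.

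The delicate step is showing that the image $\rho(\mathrm{DGal}(L/K))$ is Zariski closed in $\mathrm{GL}(2,\mathbb C)$. To do this I would work with the Picard--Vessiot ring $R=K[y_1,y_2,\partial_x y_1,\partial_x y_2,W^{-1}]$, where $W=y_1\,\partial_x y_2 - y_2\,\partial_x y_1$ is the Wronskian (nonzero by linear independence). Setting $G=\rho(\mathrm{DGal}(L/K))$, the key identification is the $R$-algebra isomorphism
\[
R\otimes_K R \;\cong\; R\otimes_{\mathbb C}\mathbb C[G],
\]
whose proof uses exactly the hypothesis $\mathcal C_L=\mathcal C_K$ to descend the coefficients of the defining relations from $L$ down to $\mathbb C$. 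Once this is in place, $\mathbb C[G]$ appears as the coordinate ring of an affine algebraic subgroup of $\mathrm{GL}(2,\mathbb C)$, so $G$ is algebraic.

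The main obstacle is this last descent step: the injectivity $\rho:\mathrm{DGal}(L/K)\hookrightarrow\mathrm{GL}(2,\mathbb C)$ is formal, but proving that the image is cut out by polynomial equations requires the full Picard--Vessiot machinery, in particular the no-new-constants property together with the structure theorem for the ring $R\otimes_K R$. Without the equality of constant fields, one would only obtain a subgroup of $\mathrm{GL}(2,\mathcal C_L)$ with no mechanism to force algebraicity over $\mathbb C$, which is why that hypothesis is built into the definition of a Picard--Vessiot extension used here.
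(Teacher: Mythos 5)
The paper offers no proof of this statement: it is quoted directly from Kolchin \cite{kol} as background, so there is no in-paper argument to compare yours against. Your sketch is the standard modern proof (essentially the one in van der Put--Singer \cite{vasi}), and its first three steps are correct and complete: $V$ is two-dimensional over $\mathcal{C}_K=\mathbb{C}$ by the Wronskian argument, each $\sigma$ preserves $V$ because it commutes with $\partial_x$ and fixes the coefficients, and $\rho$ is faithful because $y_1,y_2,\partial_x y_1,\partial_x y_2$ generate $L$ over $K$.

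The step you should tighten is the closedness of the image. As written, the isomorphism $R\otimes_K R\cong R\otimes_{\mathbb{C}}\mathbb{C}[G]$ is circular: $\mathbb{C}[G]$ presupposes that $G$ is already an algebraic group with a coordinate ring, which is what you are trying to prove. The non-circular formulation sets $B=(R\otimes_K R)^{\partial_x}$, the ring of constants of $R\otimes_K R$; the no-new-constants hypothesis gives $R\otimes_K R\cong R\otimes_{\mathbb{C}}B$, one checks that $B$ is a finitely generated Hopf algebra over $\mathbb{C}$ representing the automorphism functor, and only then does $G\cong \mathrm{Spec}(B)(\mathbb{C})$ become algebraic, with the action on $V$ giving a closed immersion into $\mathrm{GL}(2,\mathbb{C})$. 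Alternatively, Kolchin's original and more elementary route avoids Hopf algebras entirely: let $I$ be the ideal of all polynomial relations over $K$ satisfied by $y_1,y_2,\partial_x y_1,\partial_x y_2$; a matrix $(c_{ij})\in\mathrm{GL}(2,\mathbb{C})$ induces a differential automorphism precisely when the corresponding substitution maps $I$ into itself, and this is a system of polynomial conditions on the $c_{ij}$. Either route works, but the descent/closedness step is exactly where the real content lies, and your sketch currently gestures at it rather than carrying it out.
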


We denote by $G^0$ the \emph{connected component of the identity}, thus, when $G^0$ satisfies some property, we say that $G$ virtually
satisfies such property.

\begin{theorem}[Lie-Kolchin Theorem]\label{liekolchin} Let $G\subseteq \mathrm{GL}(2,\mathbb{C}%
) $ be a virtually solvable group. Then $G^0$ is triangularizable.
\end{theorem}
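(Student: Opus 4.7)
The plan is to reduce the statement to the classical Lie--Kolchin theorem for connected solvable linear algebraic groups. By the convention just introduced, ``$G$ is virtually solvable'' means its identity component $G^{0}$ is solvable; since $G^{0}$ is automatically a connected algebraic subgroup of $\mathrm{GL}(2,\mathbb{C})$, it suffices to establish the following: every connected solvable algebraic subgroup of $\mathrm{GL}(2,\mathbb{C})$ is conjugate to a subgroup of upper triangular matrices. This is in turn equivalent to exhibiting one line $\mathbb{C}v \subseteq \mathbb{C}^{2}$ that is fixed by every element of $G^{0}$: extending $v$ to a basis of $\mathbb{C}^{2}$ then conjugates $G^{0}$ into the upper triangular subgroup.

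To produce such an invariant line, I would pass to projective space. The linear action of $G^{0}$ on $\mathbb{C}^{2}$ induces a regular action on the complete variety $\mathbb{P}^{1}(\mathbb{C})$, and fixed points there correspond bijectively to $G^{0}$-invariant lines in $\mathbb{C}^{2}$. Borel's fixed point theorem, which says that every connected solvable algebraic group acting on a nonempty complete variety has a fixed point, then produces the desired $[v] \in \mathbb{P}^{1}(\mathbb{C})$, and the change of basis above triangularizes $G^{0}$.

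The main obstacle is invoking Borel's theorem as a black box; in dimension two one can replace it with a direct induction on the derived length of $G^{0}$. The base case is abelian, where commuting matrices on $\mathbb{C}^{2}$ share a common eigenvector (either by simultaneous diagonalization of the semisimple parts, or because any unipotent element already fixes a single line). For the inductive step, the derived subgroup $[G^{0},G^{0}]$ has strictly smaller derived length and hence admits a common eigenvector by hypothesis; the finitely many $[G^{0},G^{0}]$-weight spaces in $\mathbb{C}^{2}$ are then permuted by $G^{0}$, and the connectedness of $G^{0}$ forces each weight to be fixed, giving either a $G^{0}$-invariant line directly or the degenerate case in which $[G^{0},G^{0}]$ acts by scalars and the argument reduces to the abelian base case applied to $G^{0}$ itself.
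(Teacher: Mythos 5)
The paper does not prove this statement at all: Theorem~\ref{liekolchin} is quoted as classical background (the Lie--Kolchin theorem, for which the paper points to the standard references such as \cite{kol} and \cite{vasi}), so there is no internal proof to compare against. Your argument is the standard textbook proof and is essentially correct, with the right reading of the paper's convention that ``virtually solvable'' means $G^{0}$ is solvable; implicitly you are also using that $G$ is an algebraic (Zariski-closed) subgroup of $\mathrm{GL}(2,\mathbb{C})$, which holds in context since $G$ is a differential Galois group, and that $[G^{0},G^{0}]$ is again closed and connected, which is needed for the induction. The one step you should tighten is the ``degenerate case'' in which $[G^{0},G^{0}]$ acts by scalars on $\mathbb{C}^{2}$: it is not enough to say the argument ``reduces to the abelian base case applied to $G^{0}$ itself,'' because $G^{0}$ is not yet known to be abelian there. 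The missing observation is that elements of $[G^{0},G^{0}]$ are products of commutators and hence have determinant $1$, so a scalar element must be $\pm I$; thus $[G^{0},G^{0}]$ is finite, and being connected it is trivial, which is what actually forces $G^{0}$ to be abelian and lets the base case apply. With that point filled in (or, more cleanly, by just invoking Borel's fixed point theorem on $\mathbb{P}^{1}(\mathbb{C})$ as in your first paragraph), the proof is complete; it is more self-contained than the paper, which treats the result purely as a black box.
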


We say that a linear differential equation $%
\mathcal{L}$ is  \textit{integrable} if the Picard-Vessiot
extension $L\supset K$ is obtained as a tower of differential fields $%
K=L_0\subset L_1\subset\cdots\subset L_m=L$ such that $L_i=L_{i-1}(\eta)$
for $i=1,\ldots,m$, where either

\begin{enumerate}
\item $\eta$ is {\emph{algebraic}} over $L_{i-1}$, that is $\eta$ satisfies
a polynomial equation with coefficients in $L_{i-1}$.

\item $\eta$ is {\emph{primitive}} over $L_{i-1}$, that is $\partial_x\eta
\in L_{i-1}$.

\item $\eta$ is {\emph{exponential}} over $L_{i-1}$, that is $\partial_x\eta
/\eta \in L_{i-1}$.
\end{enumerate}

The solutions obtained throgh such towers are called 
\textit{Liouvillian}. The \emph{special functions} are not always Liouvillian, we can see that Airy equation has not Liouvillian solutions, while Bessel equation has Liouvillian solutions for special values of the parameter, see \cite{ince,niuv}. Thus, we mean by integrable whenever the differential equation has Liouvillian solutions instead of special functions solutions. 

\begin{theorem}[Kolchin]\label{kolchin}
A linear differential equation  is integrable if and only if $\mathrm{DGal}%
(L/K)$ is virtually solvable.
\end{theorem}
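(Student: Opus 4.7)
The plan is to establish both directions using the Galois correspondence between intermediate differential fields of the Picard-Vessiot extension $L/K$ and Zariski-closed subgroups of $G := \mathrm{DGal}(L/K)$, combined with the Lie-Kolchin theorem (Theorem~\ref{liekolchin}) already in hand.

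For the ``integrable $\Rightarrow$ virtually solvable'' direction, I would start from the Liouvillian tower $K = L_0 \subset L_1 \subset \cdots \subset L_m = L$ and first refine it so that each step $L_i \subset L_{i+1}$ is itself a Picard-Vessiot extension (by passing to normal closures and inserting algebraic substeps where needed, with the field of constants preserved). The key structural observation is that the Galois group of each refined step is small: an algebraic step gives a finite group; a primitive step $L_{i+1} = L_i(\eta)$ with $\partial_x \eta \in L_i$ gives a closed subgroup of the additive group $(\mathbb{C},+)$; and an exponential step with $\partial_x \eta/\eta \in L_i$ gives a closed subgroup of the multiplicative group $\mathbb{C}^{*}$. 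In all three cases the identity component is abelian. Translating the refined tower through the Galois correspondence yields a chain of closed normal subgroups of $G$ with abelian (or finite) successive quotients, which forces $G^{0}$ to be solvable, i.e.\ $G$ is virtually solvable.

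For the converse, assume $G$ is virtually solvable. By Lie-Kolchin, $G^{0}$ acts on the two-dimensional solution space $V = \mathrm{span}_{\mathbb{C}}(y_1, y_2)$ by upper triangular matrices in a suitable basis, so $G^{0}$ stabilizes a one-dimensional subspace $\mathbb{C} y_1 \subset V$. This makes the logarithmic derivative $u := \partial_x y_1 / y_1$ invariant under $G^{0}$; hence $u$ lies in the fixed field $L^{G^{0}}$, which is a finite algebraic extension of $K$ since $|G/G^{0}| < \infty$. Therefore $K \subset K(u)$ is algebraic, $K(u) \subset K(u, y_1)$ is exponential, and a standard reduction-of-order argument $y_2 = y_1 \int (W/y_1^{2})$ (with $W$ the Wronskian, itself the exponential of an integral of a coefficient of $\mathcal{L}$) exhibits $y_2$ through a further exponential step followed by a primitive step. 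Hence $L$ is built as a Liouvillian tower over $K$, so $\mathcal{L}$ is integrable.

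The main obstacle I foresee lies in the forward direction: one must justify refining the given Liouvillian tower into a tower of Picard-Vessiot extensions without enlarging the field of constants, and then identify each successive Galois group as a closed subgroup of $(\mathbb{C},+)$, $\mathbb{C}^{*}$, or a finite group. This needs the classification of algebraic subgroups of $\mathrm{GL}(1,\mathbb{C})$ together with careful bookkeeping of constants across the refinement. The backward direction is comparatively clean once Lie-Kolchin is invoked; the only subtle point there is the finiteness $[L^{G^{0}} : K] < \infty$, which follows from the Galois correspondence applied to the finite quotient $G/G^{0}$.
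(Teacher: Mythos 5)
The paper offers no proof of this statement to compare against: Theorem~\ref{kolchin} is imported as a classical result (Kolchin; see \cite{kol,vasi}), stated without argument. Judged on its own, your outline is the standard textbook proof and is essentially correct in the paper's setting of second-order equations. The backward direction is clean as you present it: Lie--Kolchin gives a $G^{0}$-stable line $\mathbb{C}y_{1}$, so $u=\partial_{x}y_{1}/y_{1}$ lies in $L^{G^{0}}$, which is finite algebraic over $K$ by the Galois correspondence applied to the finite quotient $G/G^{0}$, and reduction of order via the Wronskian finishes the tower. In the forward direction, the obstacle you flag is real but less severe than you suggest, precisely because the paper's definition of integrability requires the tower to terminate at the Picard--Vessiot extension $L$ itself: hence $\mathcal{C}_{L_{i}}=\mathcal{C}_{K}$ for every intermediate field with no extra bookkeeping, and for $\sigma\in\mathrm{DGal}(L/L_{i})$ a primitive (resp.\ exponential) generator $\eta$ satisfies $\sigma(\eta)-\eta\in\mathcal{C}_{K}$ (resp.\ $\sigma(\eta)/\eta\in\mathcal{C}_{K}$), so $L_{i+1}$ is automatically stable and $G_{i}/G_{i+1}$ embeds in $(\mathbb{C},+)$ or $\mathbb{C}^{*}$; only the algebraic steps need normal closures. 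Two small corrections: the chain $G_{i}=\mathrm{DGal}(L/L_{i})$ you obtain is subnormal, not a chain of subgroups normal in $G$ (which is all you need), and the final deduction that $G^{0}$ is solvable should be phrased as: $(G^{0}\cap G_{i})/(G^{0}\cap G_{i+1})$ embeds in $G_{i}/G_{i+1}$, and a connected algebraic group has no proper closed subgroup of finite index, so the finite quotients collapse on $G^{0}$ and the remaining quotients are abelian.
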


\begin{proposition}[Riccati transformations, \protect\cite{aclamopa}]
\label{prop:almp} Let $K$ be a differential field, 
$a_0(x)$, $a_1(x)$, $a_2(x)$, $r(x)$, $\rho(x)$, $b_0(x)$, $b_1(x)$
belonging to $K$. Consider now the following forms associated to
any second order differential equation (ode) and Riccati equation:

\begin{itemize}
\item[$(i)$] Second order ode (in general form):
\begin{equation}  \label{lemma:2ode:initial_form}
\partial_x^2y+b_1 \partial_xy+ b_0 y=0.
\end{equation}

\item[$(ii)$] Second order ode (in reduced form):
\begin{equation}  \label{lemma:2ode:normal_form}
\partial_x^2\xi=\rho\xi.
\end{equation}

\item[$(iii)$] Riccati equation (in general form):
\begin{equation}  \label{lemma:riccati:initial_form}
\partial_xv=a_0 + a_1 v + a_2 v^2 ,\quad a_2 \neq 0.
\end{equation}

\item[$(iv)$] Riccati equation (in reduced form):
\begin{equation}  \label{lemma:riccati:normal_form}
\partial_xw=r-w^2,
\end{equation}
\end{itemize}

Then, there exist transformations $T$, $B$, $S$ and $R$ leading some of
these equations into the other ones, as showed in the following diagram:
\begin{equation*}
\xymatrix{ \partial_xv=a_0 + a_1 v + a_2 v^2 \ar[r]^-{T} & \partial_xw=r - w^2 \\ \partial_x^2y+b_1 \partial_xy +
b_0 y=0 \ar[r]^-{S} \ar@{<-}[u]^{B} & \ar@{<-}[u]_{R} \partial_x^2\xi=\rho \xi. }
\end{equation*}%
The new independent variables are defined by means of
\begin{equation*}
\begin{array}{lcl}
T:\ v=-\left( \dfrac{\partial_xa_{2}}{2a_{2}^{2}}+\dfrac{a_{1}}{2a_{2}}%
\right) -\dfrac{1}{a_{2}}w, & \qquad \qquad & B:\ {\displaystyle v=-\frac{1}{%
a_{2}}\frac{\partial_xy}{y}}, \\
&  &  \\
\ S:\ {\displaystyle y=\xi \mathrm{e}^{-{\frac{1}{2}}\int b_{1}dx}}, &
\qquad \qquad & R:\ {\displaystyle w=\frac{\partial_x\xi}{\xi }},%
\end{array}%
\end{equation*}%
and the functions $r$, $\rho $, $b_{0}$ and $b_{1}$ are given by
\begin{align}
r& =\frac{1}{\beta }\left( a_{0}+a_{1}\alpha +a_{2}\alpha ^{2}-\partial_x\alpha\right) , \\
\alpha & =-\left( {\frac{\partial_xa_{2}}{2a_{2}^{2}}}+{\frac{a_{1}}{2a_{2}}}%
\right) ,\quad \beta =-\frac{1}{a_{2}}, \\
b_{1}& =-\left( a_{1}+{\frac{\partial_xa_{2}}{a_{2}}}\right) ,\quad
b_{0}=a_{0}a_{2}, \\
\rho & =r=\frac{b_{1}^{2}}{4}+\frac{\partial_xb_{1}}{2}-b_{0}.
\end{align}
\end{proposition}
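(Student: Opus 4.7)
The proposition is a purely mechanical assertion: each of the four substitutions $S$, $R$, $B$, $T$ converts its source equation into the stated target equation, and the formulas relating $r,\rho,b_0,b_1$ to $a_0,a_1,a_2$ fall out of that computation. The plan is therefore to verify the four arrows one by one by direct substitution, and then to confirm the consistency identity $\rho=r$ which makes the diagram commute.

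\emph{The two ``linearization'' arrows $S$ and $R$.} For $S$, set $\mu=\mathrm{e}^{-\frac{1}{2}\int b_1\,dx}$, so that $\partial_x\mu=-\tfrac12 b_1\mu$. Writing $y=\xi\mu$ and applying the product rule twice, I would insert the resulting expressions for $\partial_xy$ and $\partial_x^2y$ into \eqref{lemma:2ode:initial_form}, divide by the nonvanishing factor $\mu$, and observe that the terms involving $\partial_x\xi$ cancel, leaving precisely $\partial_x^2\xi=\rho\xi$ with $\rho=\tfrac{b_1^2}{4}+\tfrac{\partial_xb_1}{2}-b_0$. For $R$, the substitution $w=\partial_x\xi/\xi$ gives directly $\partial_xw=\partial_x^2\xi/\xi-w^2$, and applying \eqref{lemma:2ode:normal_form} turns this into $\partial_xw=\rho-w^2$, i.e.\ \eqref{lemma:riccati:normal_form} with $r=\rho$.

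\emph{The two Riccati arrows $B$ and $T$.} For $B$, set $v=-\frac{1}{a_2}\partial_xy/y$, so $\partial_xy=-a_2vy$; a further differentiation yields $\partial_x^2y=-(\partial_xa_2)vy-a_2(\partial_xv)y+a_2^2v^2y$. Substituting into \eqref{lemma:2ode:initial_form} and dividing by $y$ gives a scalar equation that, once solved for $\partial_xv$, reads $\partial_xv=(b_0/a_2)-(b_1+\partial_xa_2/a_2)v+a_2v^2$. Comparing with \eqref{lemma:riccati:initial_form} identifies $b_0=a_0a_2$ and $b_1=-(a_1+\partial_xa_2/a_2)$. For $T$, the natural move is the ansatz $v=\alpha+\beta w$ with undetermined $\alpha,\beta$. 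Computing $\partial_xv$ and plugging into \eqref{lemma:riccati:initial_form}, the right-hand side becomes a polynomial in $w$ of degree $2$; matching coefficients with $r-w^2$ gives three conditions. The $w^2$ coefficient forces $a_2\beta^2=-\beta$, hence $\beta=-1/a_2$; the linear coefficient then forces $\alpha=-\partial_xa_2/(2a_2^2)-a_1/(2a_2)$; and the constant term produces the stated formula $r=\frac{1}{\beta}(a_0+a_1\alpha+a_2\alpha^2-\partial_x\alpha)$.

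\emph{Expected main obstacle.} Each of the four substitutions is elementary, so the only substantive point is the compatibility relation $\rho=r$, which expresses that the square commutes. The task is to feed $b_0=a_0a_2$ and $b_1=-(a_1+\partial_xa_2/a_2)$ into $\rho=\tfrac{b_1^2}{4}+\tfrac{\partial_xb_1}{2}-b_0$ and check that the result agrees with $\tfrac{1}{\beta}(a_0+a_1\alpha+a_2\alpha^2-\partial_x\alpha)$ after the values of $\alpha,\beta$ are inserted. Both sides involve $\partial_x^2a_2$ and quadratic cross-terms in $\partial_xa_2/a_2$ and $a_1$, and the cancellation is not immediate on inspection; organising the calculation by grouping the $\partial_x^2a_2$ contributions, then the $(\partial_xa_2)^2$ contributions, and finally the mixed $a_1$--terms should make each pair cancel cleanly. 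This bookkeeping is what I expect to be the only nontrivial step of the write-up; everything else is substitution and collection of terms.
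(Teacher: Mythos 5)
Your plan is correct, and all four substitutions check out as you describe; note, however, that the paper itself offers no proof of this proposition --- it is imported verbatim from the cited reference \cite{aclamopa} --- so there is no in-paper argument to compare against. One remark on your anticipated ``main obstacle'': the consistency check $\rho=r$ is shorter than you fear, because once you have $b_1=-(a_1+\partial_xa_2/a_2)$ you can observe that $\alpha=b_1/(2a_2)$, whence $-a_2^2\alpha^2=-b_1^2/4$, $-a_2a_1\alpha-\tfrac{b_1}{2}\,\partial_xa_2/a_2=b_1^2/2$ and $a_2\partial_x\alpha=\tfrac12\partial_xb_1-\tfrac{b_1}{2}\,\partial_xa_2/a_2$, so the terms collapse to $\tfrac{b_1^2}{4}+\tfrac{\partial_xb_1}{2}-b_0$ in two lines. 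Alternatively, the identity $\rho=r$ needs no expansion at all: composing $B$ with $T$ gives $w=a_2\alpha+\partial_xy/y=\tfrac{b_1}{2}+\partial_xy/y$, and under $S$ one has $\partial_xy/y=\partial_x\xi/\xi-\tfrac{b_1}{2}$, so $w=\partial_x\xi/\xi$ exactly as in $R$; since the square commutes at the level of the changes of variable, the two reduced Riccati equations obtained by going around it must coincide.
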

\begin{remark}\label{rem:almp}
From Proposition \ref{prop:almp}, it is recovered the well known result in differential Galois theory (see for example \cite{vasi}):
Riccati equation has an algebraic solution over the
differential field $K$ if and only if its associated second order differential equation has two independent Liouvillian solutions
(the differential Galois group of the second order differential equation is
virtually solvable). Furthermore, the differential Galois group for equation \eqref{lemma:2ode:normal_form} is a subgroup of $\mathrm{SL}(2,\mathbb C)$, such is the case for the stationary Schr\"odinger equation.
\end{remark}

Kovacic (see \cite{kov} and improvements in \cite{ac,acmowe}) developed an algorithm to solve the
differential equation $$\partial _{\tau}^{2} y =ry, \quad r=\frac{f}{g},\quad f,g\in \mathbb{
C}[\tau].$$ There are four cases in Kovacic's algorithm. Only for cases 1, 2 and 3 we
can solve the differential equation, but for the case 4 the differential
equation is not integrable. We use the following notations: $$\Gamma^{\prime}=\left\{ c\in\mathbb{C}:\,g(c)=0\right\}, \quad \Gamma=\Gamma^{\prime}\cup\{\infty\},$$ by $\circ(r_c)$, we mean
the multiplicity of $c$ as a pole of $r$, while by  $\circ\left( r_{\infty}\right) $ we
mean the order of $\infty$ as a zero of $r$. Now, we summarize the cases one and two of Kovacic Algorithm that will be used in Section \ref{newpropagators}.

\textbf{Case 1.}  

\textit{Step 1.}

\begin{description}
\item[$(c_{2})$] If $\circ\left( r_{c}\right) =2,$ and
\begin{equation*}
r= \cdots + b(\tau-c)^{-2}+\cdots,\quad \text{then}
\end{equation*}
\begin{equation*}
\left[ \sqrt {r}\right]_{c}=0,\quad \alpha_{c}^{\pm}=\frac{1\pm\sqrt{1+4b}}{2%
}.
\end{equation*}

\item[$(\infty_{2})$] If $\circ\left( r_{\infty}\right) =2,$ and $r= \cdots
+ \tau^{2}+\cdots$, then
\begin{equation*}
\left[ \sqrt{r}\right] _{\infty}=0,\quad\alpha_{\infty}^{\pm}=\frac{1\pm%
\sqrt{1+4b}}{2}.
\end{equation*}
\end{description}

\textit{Step 2.} Find $D\neq\emptyset$ defined by
\begin{equation*}
D=\left\{ n\in\mathbb{Z}_{+}:n=\alpha_{\infty}^{\varepsilon (\infty)}-{%
\displaystyle\sum\limits_{c\in\Gamma^{\prime}}} \alpha_{c}^{\varepsilon(c)},%
\forall\left( \varepsilon\left( p\right) \right) _{p\in\Gamma}\right\} .
\end{equation*}
If $D=\emptyset$, then we should start with the case 2. Now, if $\mathrm{Card%
}(D)>0$, then for each $n\in D$ we search $\omega$ $\in\mathbb{C}(\tau)$ such
that
\begin{equation*}
\omega=\varepsilon\left( \infty\right) \left[ \sqrt{r}\right] _{\infty}+{%
\displaystyle\sum\limits_{c\in\Gamma^{\prime}}} \left( \varepsilon\left(
c\right) \left[ \sqrt{r}\right] _{c}+{{\alpha_{c}^{\varepsilon(c)}}\over {(\tau-c)}%
}\right).
\end{equation*}

\textit{Step 3}. For each $n\in D$, search for a monic polynomial $P_n$ of
degree $n$ with
\begin{equation*}  \label{recu1}
\partial_\tau^2P_n + 2\omega \partial_\tau P_n + (\partial_\tau\omega + \omega^2 - r)
P_n = 0.
\end{equation*}
If success is achieved then $y_1=P_n e^{\int\omega}$ is a solution of
the differential equation. Else, case 1 cannot hold. \bigskip

\textbf{Case 2.}

\textit{Step 1.} Search for each $c\in\Gamma^{\prime}$ and $\infty$ the sets
$E_{c}\neq\emptyset$ and $E_{\infty}\neq\emptyset.$ For each $%
c\in\Gamma^{\prime}$ and for $\infty$ we define $E_{c}\subset\mathbb{Z}$ and
$E_{\infty}\subset\mathbb{Z}$ as follows:

\begin{description}
\item[($c_2$)] If $\circ\left( r_{c}\right) =2,$ and $r= \cdots +
b(\tau-c)^{-2}+\cdots ,\ $ then
\begin{equation*}
E_{c}=\left\{ 2+k\sqrt{1+4b}:k=0,\pm2\right\}\cap\mathbb{Z}.
\end{equation*}

\item[$(\infty_{2})$] If $\circ\left( r_{\infty}\right) =2,$ and $r= \cdots
+ b\tau^{2}+\cdots$, then
\begin{equation*}
E_{\infty }=\left\{ 2+k\sqrt{1+4b}:k=0,\pm2\right\}\cap\mathbb{Z}.
\end{equation*}
\end{description}

\textit{Step 2.} Find $D\neq\emptyset$ defined by
\begin{equation*}
D=\left\{ n\in\mathbb{Z}_{+}:\quad n=\frac{1}{2}\left( e_{\infty}- {%
\displaystyle\sum\limits_{c\in\Gamma^{\prime}}} e_{c}\right) ,\forall
e_{p}\in E_{p},\quad p\in\Gamma\right\}.
\end{equation*}
If $D=\emptyset,$ then we should start the case 3. Now, if $\mathrm{Card}%
(D)>0,$ then for each $n\in D$ we search a rational function $\theta$
defined by
\begin{equation*}
\theta=\frac{1}{2} {\displaystyle\sum\limits_{c\in\Gamma^{\prime}}} \frac{%
e_{c}}{\tau-c}.
\end{equation*}

\textit{Step 3.} For each $n\in D,$ search a monic polynomial $P_n$ of
degree $n$, such that {\small {\
\begin{equation*}  \label{recu2}
\partial_\tau^3P_n+3\theta \partial_\tau^2P_n+(3\partial_\tau\theta+3\theta
^{2}-4r)\partial_\tau P_n+\left( \partial_\tau^2\theta+3\theta\partial_\tau\theta
+\theta^{3}-4r\theta-2\partial_\tau r\right)P_n=0.
\end{equation*}%
}} If $P_n$ does not exist, then case 2 cannot hold. If such a polynomial is
found, set $\phi = \theta + \partial_xP_n/P_n$ and let $\omega$ be a
solution of
\begin{equation*}
\omega^2 + \phi \omega + {\frac{1}{2}}\left(\partial_\tau\phi + \phi^2
-2r\right)= 0.
\end{equation*}

Then $y_1 = e^{\int\omega}$ is a solution of the differential equation.
\bigskip

We can see that applying by hand Kovacic Algorithm it can be a little difficults, but thanks
to has been implemented it in Maple (command \texttt{kovacicsols}) we can avoid such calculations. The
problem with Maple is that the answers can have complicated expressions that
should be transformed into more suitable and readable expressions. On the
other hand, Kovacic Algorithm only works with rational coefficients, for
instance, when the differential equation has not rational coefficients we cannot apply Kovacic Algorithm.

 We recall that in most of the cases is better the application of Kovacic Algorithm by
hand (when can appear parameters or the solution is very complicated), for
example the stationary Schr\"{o}dinger equation $\partial _{x}^{2}\Psi
=(x^{2}-\lambda )\Psi $ cannot be solved with \texttt{kovacicsols} due to
the parameter $\lambda$, thus, the output given by Maple is $[,]$, i.e., we
should solve the equation using Kovacic Algorithm by hand to obtain solutions conditioned to some values of $\lambda$. The interested
reader can see in \cite{ac,acmowe} the solutions of Schr\"{o}dinger
equations (stationary) using Kovacic Algorithm by hand, recalling that was
not possible the obtaining of such solutions through \texttt{kovacicsols}.

One question is concerning to the application of Kovacic Algorithm whether the coefficients are not rational functions, this problem can be solved using \emph{Hamiltonian Algebrization}, procedure developed in \cite{ac,acmowe}. We present a short summary of the Hamiltonian algebrization process that will be used in Section \ref{newpropagators}.
Following \cite{ac,acmowe}, we say that $\tau=\tau(t)$ is a \emph{Hamiltonian change of variable} whether $(\tau,\partial_t\tau)$ is a solution curve of the Hamiltonian $$H=\frac{p^2}{2}+V(\tau), \quad \partial_t\tau=\partial_pH=p,\quad \partial_tp=-\partial_\tau H=-\partial_\tau V(\tau),\quad V(\tau)\in\mathbb{C}(\tau).$$  Thus, we can denote by $\alpha$ to $p^2$, which is dependent of $\tau$, that is $$\alpha=2H-2V(\tau)=(\partial_t\tau)^2,\quad \partial_t\tau=\sqrt{\alpha}.$$ In this way, as particular case because the theory is more general, we can transform differential equations $$\partial_t^2\mu+p\partial_t\mu+q\mu=0\leadsto \widehat\partial_\tau^2\widehat\mu+\widehat p\widehat\partial_\tau\widehat\mu+\widehat q\widehat\mu=0,$$ where $\widehat\partial_\tau=\sqrt{\alpha}\partial_\tau$, $\widehat \mu\circ \tau=\mu$, $\widehat p\circ \tau=p$, $\widehat q\circ \tau=q$. Moreover, the differential equation  $\widehat\partial_\tau^2\widehat\mu+\widehat p\widehat\partial_\tau\widehat\mu+\widehat q\widehat\mu=0$ can be explicitely written as
\begin{equation}\label{algformsode}
\partial_\tau^2\widehat \mu+\left(\frac12\partial_\tau(\ln\alpha)+\frac{\widehat p}{\sqrt{\alpha}}\right)\partial_\tau\widehat\mu+\left(\frac{\widehat q}{\alpha}\right)\widehat\mu=0.
\end{equation}
In case that $\sqrt{\alpha}$, $\widehat p$ and $\widehat q$ are rational functions in $\tau$, the equation \eqref{algformsode} is the algebraic form of the first one, i.e., the equation $\partial_t^2\mu+p\partial_t\mu+q\mu=0$ has been algebrized through a Hamiltonian change of variable. This procedure is called \emph{Hamiltonian Algebrization}, which  is an \emph{isogaloisian transformation}, i.e., the  differential Galois group is preserved under Hamiltonian Algebrization procedure. Further details and proofs can be found in \cite{ac,acmowe}.

\subsection{Propagators and Green Functions}

In this section, as well in the rest of the paper, we follow \cite{cosu}
considering the one-dimensional time-dependent Schr\"{o}dinger equation for
an harmonic oscillator
\begin{equation}
i\partial _{t}\psi =H\psi ,\quad H=a(t)p^{2}+b(t)x^{2}+c(t)(px+xp),\quad
p=-i\partial _{x}.  \label{SE0}
\end{equation}
The Schr\"{o}dinger equation \eqref{SE0} can be written as
\begin{equation}
i\partial _{t}\psi =\left(-a\left( t\right) \partial _{x}^{2} +b\left(
t\right) x^{2} -ic\left( t\right) -2ic\left( t\right) 
x\partial _{x}\right)\psi .  \label{SE1}
\end{equation}
We start considering the Riccati equation
\begin{equation}
\partial_t\alpha+b\left( t\right) +2c\left( t\right) \alpha +4a\left(
t\right) \alpha ^{2}=0,  \label{Riccati equation}
\end{equation}%
where $a(t)$, $b(t)$ and $c(t)$ are elements of a differential field $K$,
with coefficient field $\mathbb{C}$.  By Proposition \ref{prop:almp} we can transform the Riccati equation (\ref{Riccati equation}) through the change of variable
\begin{equation}
\alpha \left( t\right) =\frac{1}{4a\left( t\right) }\frac{\partial_t\mu \left( t\right) }{\mu \left( t\right) }-\frac{c(t)}{4a(t)},
\end{equation}%
into the second order differential equation
\begin{equation}
\partial_t^2\mu -\tau (t)\partial_t\mu+4\sigma (t)\mu =0,.
\label{Caractheristic equation}
\end{equation}%
where \begin{equation*}
\tau \left( t\right) =\frac{\partial_ta}{a}-4c,\qquad \sigma \left( t\right)
=ab+\frac{c^{2}}{2}+\frac{c}{4}\left( \frac{\partial_ta}{a}-\frac{\partial_tc}{c}\right) 
\end{equation*}

Moreover, by Remark \ref{rem:almp}, the differential Galois group of the differential equation (\ref{Caractheristic equation}) is
virtually solvable if and only if the Riccati
equation (\ref{Riccati equation}) has an algebraic solution over the
differential field $K$. Furthermore, by Proposition \ref{prop:almp}, departing from the differential
equation (\ref{Caractheristic equation}) we can arrive at the Riccati equation (\ref{Riccati equation})
through changes of variables. The following lemmas show how can we construct
propagators based on explicit solutions in (\ref{Riccati equation}) and (\ref%
{Caractheristic equation}).

\begin{lemma}\label{lemma1}
\cite{colosusu, cosu, susu2} Given $a(t),b(t)$ and $c(t)$ piecewise
continuous, there exists an interval $I$ of time where the following
(Riccati-type) system
\begin{align}
& \partial_t\alpha+b\left( t\right) +2c\left( t\right) \alpha +4a\left(
t\right) \alpha ^{2}=0,  \label{schr7} \\
& \partial_t\beta +\left( c\left( t\right) +4a\left( t\right) \alpha
\left( t\right) \right) \beta =0,  \label{schr8} \\
& \partial_t\gamma +a\left( t\right) \beta ^{2}\left( t\right) =0,
\label{schr9}
\end{align}%
has as a fundamental solution in terms of solutions of the following
(characteristic) equation%
\begin{equation}
\partial_t^2\mu-\tau \left( t\right) \partial_t\mu+4\sigma \left(
t\right) \mu =0  \label{charac}
\end{equation}%
\begin{equation}
\tau \left( t\right) =\frac{\partial_ta}{a}-4c,\qquad \sigma \left( t\right)
=ab+\frac{c^{2}}{2}+\frac{c}{4}\left( \frac{\partial_ta}{a}-\frac{\partial_tc}{c}\right)  \label{charac2}
\end{equation}
given by$:$
\begin{eqnarray}
&&\alpha _{0}\left( t\right) =\frac{1}{4a\left( t\right) }\frac{\partial_t\mu
_{0}\left( t\right) }{\mu _{0}\left( t\right) }-\frac{c\left(
t\right) }{4a\left( t\right) },  \label{A0} \\
&&\beta _{0}\left( t\right) =-\frac{1}{\mu _{0}\left( t\right) },\qquad
\label{B0} \\
&&\gamma _{0}\left( t\right) =\frac{1}{2\mu _{1}\left( 0\right) }\frac{\mu
_{1}\left( t\right) }{\mu _{0}\left( t\right) }+\frac{c\left( 0\right) }{%
2a\left( 0\right) }  \label{C0}
\end{eqnarray}
provided that $\mu _{0}$ and $\mu _{1}$ are standard solutions of (\ref%
{charac})-(\ref{charac2}) with $\mu _{0}\left( 0\right) =0,$ $\partial_t\mu
_{0}\left( 0\right) =2a\left( 0\right) \neq 0,$ and $\mu
_{1}\left( 0\right) \neq 0,$ $\partial_t\mu _{1}\left( 0\right) =0,$ and $%
\delta _{0}\left( 0\right) =-\varepsilon _{0}\left( 0\right) =g\left(
0\right) /\left( 2a\left( 0\right) \right) $, $\kappa _{0}\left( 0\right)
=0. $ Further, the following asymptotics hold:{\small
\begin{eqnarray}
&&\alpha _{0}\left( t\right) =\frac{1}{4a\left( 0\right) t}-\frac{c\left(
0\right) }{4a\left( 0\right) }-\frac{\partial_ta(0)}{8a^{2}(0)}+O(t), \\
&&\beta _{0}\left( t\right) =-\frac{1}{2a\left( 0\right) t}+\frac{\partial_ta(0)}{4a^{2}(0)}+O(t),\qquad \\
&&\gamma _{0}\left( t\right) =\frac{1}{4a\left( 0\right) t}+\frac{c\left(
0\right) }{4a\left( 0\right) }-\frac{\partial_ta(0)}{8a^{2}(0)}+O(t)
\end{eqnarray}%
}as $t\rightarrow 0$, with\ $a(t),b(t)$ and $c(t)$ sufficiently smooth.
\end{lemma}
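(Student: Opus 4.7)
The plan is to exploit the triangular structure of the system (\ref{schr7})--(\ref{schr9}): equation (\ref{schr7}) is a Riccati equation in $\alpha$ alone, equation (\ref{schr8}) is linear in $\beta$ once $\alpha$ is known, and equation (\ref{schr9}) is a pure quadrature once $\beta$ is known. Each step reduces to a construction already justified by Proposition \ref{prop:almp} or by elementary integration.

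For $\alpha_0$ I would invoke the composition of the transformations $B$ and $R$ in Proposition \ref{prop:almp}. A direct substitution of $\alpha=\frac{1}{4a}\frac{\partial_t\mu}{\mu}-\frac{c}{4a}$ into (\ref{schr7})---differentiating, clearing denominators by $4a\mu$, and grouping the quadratic and derivative terms---produces the characteristic equation (\ref{charac})--(\ref{charac2}). Choosing $\mu_0$ with $\mu_0(0)=0$ and $\partial_t\mu_0(0)=2a(0)$ then singles out the particular Riccati solution whose leading singularity at $t=0$ is $1/(4a(0)t)$, as the asymptotic expansion must show.

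For $\beta_0$ I would observe that the coefficient $c+4a\alpha_0$ appearing in (\ref{schr8}) collapses, by construction of $\alpha_0$, to $\partial_t\mu_0/\mu_0$; the linear equation becomes $\partial_t\beta+(\partial_t\mu_0/\mu_0)\beta=0$, whose general solution is $\beta=K/\mu_0$, with $K=-1$ fixed by the normalization prescribed in the statement. For $\gamma_0$ I would integrate $\partial_t\gamma_0=-a\beta_0^{2}=-a/\mu_0^{2}$ by recognizing the integrand as a multiple of a total derivative: if $\mu_1$ is a second, linearly independent solution of the characteristic equation, then $\partial_t(\mu_1/\mu_0)=W(\mu_0,\mu_1)/\mu_0^{2}$, and Abel's identity for (\ref{charac}) evaluates the Wronskian in closed form. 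With the prescribed initial data $\mu_1(0)\neq 0$, $\partial_t\mu_1(0)=0$, we get $W(0)=-2a(0)\mu_1(0)$, so a single antiderivative, with the constant pinned down by $\kappa_0(0)=0$, yields the stated expression for $\gamma_0$.

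The small-$t$ asymptotics are obtained by Taylor expansion: write $\mu_0(t)=2a(0)t+\tfrac{1}{2}\partial_t^{2}\mu_0(0)\,t^{2}+O(t^{3})$, evaluate (\ref{charac}) at $t=0$ to express $\partial_t^{2}\mu_0(0)$ in terms of $a(0),\partial_t a(0),c(0)$, then expand the reciprocal and substitute into $\alpha_0,\beta_0,\gamma_0$. The main obstacle I anticipate is the quadrature step for $\gamma_0$: ensuring that the Abel exponential coming from $\int\tau\,ds$ cancels against the explicit factor $a(t)$ in $-a/\mu_0^{2}$ to reproduce precisely the coefficient $1/(2\mu_1(0))$ in the formula requires careful bookkeeping of the prefactors in $\tau$ and of the chosen normalizations of $\mu_0,\mu_1$; once this is done, the remaining verifications are routine.
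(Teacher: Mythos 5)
The paper offers no proof of this lemma: it is imported verbatim (up to notation) from \cite{colosusu,cosu,susu2}, so there is nothing internal to compare against and your proposal must stand on its own. The route you describe is the right one and is essentially the derivation used in those references: the system is lower triangular; \eqref{A0} is the Riccati linearization of \eqref{schr7}; the coefficient in \eqref{schr8} collapses to $c+4a\alpha_0=\partial_t\mu_0/\mu_0$, giving $\beta_0=K/\mu_0$; and \eqref{schr9} is integrated by recognizing $-a/\mu_0^2$ as proportional to $\partial_t(\mu_1/\mu_0)=W(\mu_0,\mu_1)/\mu_0^2$ with the Wronskian evaluated by Abel's identity, $W(0)=-2a(0)\mu_1(0)$ being exactly as you say. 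The asymptotics then follow from $\mu_0(t)=2a(0)t+\tfrac12\partial_t^2\mu_0(0)t^2+O(t^3)$ together with the ODE evaluated at $t=0$.

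The obstacle you flag at the quadrature step is genuine and should not be dismissed as bookkeeping: it is precisely the point where the lemma as printed is not self-consistent. Carrying out the substitution of \eqref{A0} into \eqref{schr7} and clearing denominators yields
\begin{equation*}
\partial_t^2\mu_0-\frac{\partial_ta}{a}\,\partial_t\mu_0+\left(4ab-c^2-\partial_tc+\frac{c\,\partial_ta}{a}\right)\mu_0=0,
\end{equation*}
i.e.\ the coefficient of $\partial_t\mu_0$ is $-\partial_ta/a$ with no $-4c$ term, and the $c^2$ contribution to $4\sigma$ is $-c^2$ rather than $+2c^2$. With this $\tau$, Abel gives $W(t)=W(0)\,a(t)/a(0)=-2\mu_1(0)\,a(t)$, the factor $a(t)$ cancels exactly against the $a$ in $\partial_t\gamma_0=-a/\mu_0^2$, and \eqref{C0} integrates cleanly. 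With the printed $\tau=\partial_ta/a-4c$ the Wronskian would carry an extra factor $e^{-4\int_0^tc\,ds}$ that does not cancel, and both \eqref{C0} and the stated expansion of $\alpha_0$ would fail (the finite part of $\alpha_0$ would become $-3c(0)/(4a(0))$ instead of $-c(0)/(4a(0))$); note that the printed asymptotics are consistent only with $\tau=\partial_ta/a$. So your proof must actually perform the substitution rather than take \eqref{charac2} on faith. One smaller correction: the additive constant in $\gamma_0$ is not fixed by ``$\kappa_0(0)=0$'' --- the quantities $\delta_0,\varepsilon_0,\kappa_0$ belong to the larger six-function system of the cited references and do not occur in the truncated system here; since $\mu_1/\mu_0$ blows up at $t=0$, the constant is determined by matching the finite part of the stated small-$t$ expansion of $\gamma_0$, equivalently by requiring the free-particle limit of the Green function in Lemma \ref{lemma2}.
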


\begin{lemma}\label{lemma2}
\cite{colosusu, cosu, susu2, su1} The Green function, or Feynman's
propagator, corresponding to the Schr\"{o}dinger equation \ref{SE0} can be
obtained as
\begin{equation}
\psi =G\left( x,y,t\right) =\frac{1}{\sqrt{2\pi i\mu _{0}\left( t\right) }}\
e^{i\left( \alpha _{0}\left( t\right) x^{2}+\beta _{0}\left( t\right)
xy+\gamma _{0}\left( t\right) y^{2}\right) },  \label{newgreen}
\end{equation}%
where $\alpha _{0}\left( t\right) ,$ $\beta _{0}\left( t\right) $ and $%
\gamma _{0}\left( t\right) $\ are solutions of the Riccati-type system. Then
the superposition principle allows us to solve the corresponding Cauchy
initial value problem:%
\begin{equation*}
\psi (x,t)=\int_{-\infty }^{\infty }G(x,y,t)\psi (y,0)dy
\end{equation*}%
for suitable data $\psi (x,0)$\bigskip $=\varphi (x).$ Further as $%
t\rightarrow 0,$%
\begin{eqnarray*}
G\left( x,y,t\right)  &\sim &\frac{1}{\sqrt{2\pi ia(0)t}}\exp \left[ i\frac{%
\left( x-y\right) ^{2}}{4a(0)t}\right]  \\
&&\times \exp \left[ -i\left( \frac{\partial_ta(0)}{8a^{2}(0)}\left(
x-y\right) ^{2}+\frac{c\left( 0\right) }{4a\left( 0\right) }\left(
x-y\right) ^{2}\right) \right].
\end{eqnarray*}%
\bigskip
\end{lemma}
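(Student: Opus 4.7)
The idea is to substitute the ansatz \eqref{newgreen} directly into the Schr\"odinger equation \eqref{SE1} and verify that matching coefficients in $x$ (with $y$ regarded as a parameter) reproduces exactly the Riccati-type system \eqref{schr7}--\eqref{schr9} supplied by Lemma \ref{lemma1}.

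First I would write $G=A(t)e^{iS(x,y,t)}$ with $A(t)=(2\pi i\mu_0(t))^{-1/2}$ and the quadratic phase $S=\alpha_0(t)x^2+\beta_0(t)xy+\gamma_0(t)y^2$, and compute
\[\partial_tG=\Bigl(\frac{\partial_tA}{A}+i\partial_tS\Bigr)G,\qquad \partial_xG=iS_xG,\qquad \partial_x^2G=(iS_{xx}-S_x^2)G,\]
with $S_x=2\alpha_0x+\beta_0y$ and $S_{xx}=2\alpha_0$. Substituting these into $i\partial_tG=(-a\partial_x^2+bx^2-ic-2icx\partial_x)G$ makes both sides quadratic polynomials in $x$ multiplied by $G$. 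Matching the coefficient of $x^2$ yields the Riccati equation \eqref{schr7} for $\alpha_0$; matching the coefficients of $xy$ and $y^2$ gives the linear equations \eqref{schr8} and \eqref{schr9} for $\beta_0$ and $\gamma_0$; and matching the $x$-independent piece yields $\partial_tA/A=-(2a\alpha_0+c)$, equivalent to $\partial_t\mu_0/\mu_0=4a\alpha_0+c$, which is precisely the change of variable \eqref{A0} already built into Lemma \ref{lemma1}. Thus the normalisation $\mu_0^{-1/2}$ is exactly what closes the ansatz.

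To establish the initial condition $G(\cdot,y,t)\to\delta(x-y)$ as $t\to 0^+$, I would insert the small-$t$ asymptotics of $\alpha_0,\beta_0,\gamma_0$ together with $\mu_0(t)\sim 2a(0)t$ (all from Lemma \ref{lemma1}) into \eqref{newgreen} and collect terms: the singular contributions combine into $(x-y)^2/(4a(0)t)$, yielding the leading free-particle Gaussian, while the regular $O(1)$ contributions produce the bounded phase factor displayed in the statement. Since $(2\pi i\epsilon)^{-1/2}\exp(i(x-y)^2/(4\epsilon))$ is a classical approximation of $\delta(x-y)$ as $\epsilon\to 0^+$, the superposition formula $\psi(x,t)=\int G(x,y,t)\varphi(y)\,dy$ then solves \eqref{SE0} with initial data $\varphi$, by linearity and the limit just obtained.

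The main technical obstacle is the bookkeeping: the operator $c(t)(px+xp)$ generates both a multiplicative $-ic(t)$ and a first order term $-2ic(t)x\partial_x$, and the $x$-independent part of the matched identity interlaces $\partial_tA/A$ with $-ic$ and $-2ia\alpha_0$. Keeping track of these cross terms is what forces the precise coefficients appearing in \eqref{schr7}--\eqref{schr9}; once they are isolated, everything else reduces to invoking Lemma \ref{lemma1} and the standard $\delta$-sequence argument.
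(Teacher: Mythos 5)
The paper offers no proof of this lemma at all---it is imported verbatim from \cite{colosusu, cosu, susu2, su1}---and your direct verification (substituting the Gaussian ansatz $G=A(t)e^{iS}$ into the equation, matching powers of $x$ to recover the Riccati-type system together with the normalisation identity $\partial_tA/A=-\tfrac12\partial_t\mu_0/\mu_0$, and then using the small-$t$ asymptotics of Lemma \ref{lemma1} plus the free-kernel $\delta$-sequence to settle the initial condition and the superposition formula) is exactly the standard argument of those references, so your proposal is correct and takes essentially the route the authors are implicitly relying on. One caveat: if you match coefficients literally against \eqref{SE1} as printed you obtain $\partial_t\alpha+b+4c\alpha+4a\alpha^2=0$ and $\partial_t\beta+(2c+4a\alpha)\beta=0$ rather than \eqref{schr7}--\eqref{schr8}, a factor-of-two discrepancy in the $c$-terms coming from the paper writing the drift as $-2ic(t)x\partial_x$ (from $c(px+xp)$) while quoting the Riccati system of the references, whose convention is a drift $-ic(t)x\partial_x$; this inconsistency is the paper's, not yours, but your assertion that the matching reproduces the system \emph{exactly} should be qualified on this point.
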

\section{Galoisian Approach to Propagators}\label{Galois_prop}

In this section we apply the Picard-Vessiot theory in the context of
propagators.

We are interested in \emph{Liouvillian solutions} of the linear Schr\"{o}%
dinger equation (\ref{SE0}); that is, the so-called \emph{Liouvillian
propagator} (\ref{newgreen}) is obtained through Liouvillian functions. In
this way, we can give a Galoisian formulation for this kind of integrability.

\begin{definition}
\label{liouvillian_LSE} Linear Schr\"{o}dinger equation (\ref{SE0}) is
integrable in Galoisian sense (Galois integrable), when it has a Liouvillian
propagator (\ref{newgreen}).
\end{definition}

\begin{theorem}[Galoisian approach to LSE]
\label{Galois_LSE} Linear Schr\"{o}dinger equation (\ref{SE0}) is Galois
integrable if and only if the differential Galois group of the
characteristic equation (\ref{Caractheristic equation}) is virtually
solvable.
\end{theorem}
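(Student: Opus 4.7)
My plan is to prove the equivalence by translating the Liouvillian content of the propagator (\ref{newgreen}) into the Liouvillian content of a fundamental system $\{\mu_0,\mu_1\}$ of the characteristic equation (\ref{Caractheristic equation}), and then invoking Kolchin's theorem (Theorem \ref{kolchin}).

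First I would unpack Definition \ref{liouvillian_LSE}: the propagator
\[
G(x,y,t)=\frac{1}{\sqrt{2\pi i\mu_0(t)}}\,e^{i(\alpha_0(t)x^2+\beta_0(t)xy+\gamma_0(t)y^2)}
\]
is a Liouvillian function over the coefficient field $K$ (with $x,y$ viewed as transcendental constants of integration) if and only if each of $\mu_0(t)$, $\alpha_0(t)$, $\beta_0(t)$, $\gamma_0(t)$ is Liouvillian over $K$. This is because the class of Liouvillian functions is closed under algebraic extensions (in particular square roots) and under exponentials of primitives, so the envelope $\frac{1}{\sqrt{2\pi i\mu_0}}\exp(i(\cdot))$ is Liouvillian exactly when its ingredients are, and conversely these ingredients can be recovered from $G$ by logarithmic derivatives and algebraic manipulations.

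Next I would invoke Lemma \ref{lemma1}, whose formulas (\ref{A0})--(\ref{C0}) express $\alpha_0,\beta_0,\gamma_0$ as rational functions over $K$ in $\mu_0,\mu_1,\partial_t\mu_0$. Since $a(t),c(t)\in K$ and differentiation preserves Liouvillianity, this gives at once: if $\mu_0$ and $\mu_1$ are Liouvillian, then so are $\alpha_0,\beta_0,\gamma_0$, and hence $G$ is Liouvillian. Conversely, if $G$ is Liouvillian, then $\beta_0=-1/\mu_0$ forces $\mu_0$ to be Liouvillian, and from (\ref{C0}) the quotient $\mu_1/\mu_0$ is Liouvillian, whence $\mu_1$ is Liouvillian. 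Thus Galois integrability of the Schr\"odinger equation in the sense of Definition \ref{liouvillian_LSE} is equivalent to the existence of two linearly independent Liouvillian solutions $\mu_0,\mu_1$ of the characteristic equation (\ref{Caractheristic equation}).

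The last step is to close the loop with Kolchin's theorem: the characteristic equation admits two independent Liouvillian solutions iff it is integrable in the Picard--Vessiot sense, iff $\mathrm{DGal}(L/K)$ for (\ref{Caractheristic equation}) is virtually solvable. Combining this with the equivalence established above yields the theorem. The main subtlety I expect is not algebraic but bookkeeping: one must verify that the initial conditions $\mu_0(0)=0$, $\partial_t\mu_0(0)=2a(0)$, $\mu_1(0)\neq 0$, $\partial_t\mu_1(0)=0$ can indeed be realized within the Picard--Vessiot extension (so that $\{\mu_0,\mu_1\}$ really is a fundamental system of \emph{Liouvillian} solutions and not just one Liouvillian solution plus an implicit integral), but this follows because a $2$-dimensional Liouvillian solution space is closed under $\mathbb{C}$-linear combinations and the normalization is achieved by an invertible constant matrix.
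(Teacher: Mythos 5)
Your proposal is correct and follows essentially the same route as the paper's proof: both directions reduce the Liouvillian character of the propagator to that of $\mu_0,\mu_1$ via the formulas (\ref{A0})--(\ref{C0}) of Lemma \ref{lemma1}, and then Kolchin's theorem (Theorem \ref{kolchin}) converts the existence of Liouvillian solutions of (\ref{Caractheristic equation}) into virtual solvability of its differential Galois group. If anything, your converse direction (recovering $\mu_0$ from $\beta_0=-1/\mu_0$ and $\mu_1$ from (\ref{C0})) and your remark on realizing the normalization $\mu_0(0)=0$, $\partial_t\mu_0(0)=2a(0)$ within the solution space are spelled out more explicitly than in the paper, which instead appeals to d'Alembert reduction to pass from one Liouvillian solution to two.
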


\begin{proof}
Consider $K$ as the differential field of the characteristic equation and
also consider $\mu _{0}$ and $\mu _{1}$ as in Lemma \ref{lemma1}. Let us
suppose that $\mu _{0}$ is a Liouvillian solution of the characteristic
equation; then by D'Alambert reduction method, the second solution $\mu _{1}$
is Liouvillian too. Thus, in virtue of Theorem \ref{kolchin}, we see that
the differential Galois group of the characteristic equation is virtually
solvable and $\alpha $ is an algebraic solution of the Riccati equation over 
$K$. Now, using $\mu _{0}$ and $\mu _{1}$ in $\beta $ and $\gamma $ such as
in Lemma \ref{lemma1}, we see that they are Liouvillian functions over $K$
and therefore the propagator is Liouvillian. Thus, by Definition \ref%
{liouvillian_LSE} we get that the Schr\"{o}dinger equation is Galois
integrable. Conversely, assuming that the Schr\"{o}dinger equation is Galois
integrable we see that it has a Liouvillian propagator that can be obtained
through an algebraic solution of the Riccati equation over $K$. For
instance, there exists a Liouvillian solution $\mu _{0}$ of the
characteristic equation which by Theorem \ref{kolchin} implies that its
differential Galois group is virtually solvable.
\end{proof}

\begin{remark}
In virtue of Theorem \ref{Galois_LSE} we can construct many Liouvillian
propagators through integrable second order differential equations over a
differential field as well by algebraic solutions, over such differential
field, of Riccati equations. This is the practical aim of this paper, which
will be given in Section \ref{newpropagators} and in Section \ref{toyexamples}.
\end{remark}

\section{Degenerate Parametric Oscillator and Ince's Equation by a Galoisian
Approach}\label{newpropagators}

\noindent In this section we use the Galoisian approach to LSE, Section \ref{Galois_prop}, Theorem \ref%
{Galois_LSE} to study the integrability of  Ince's equation 
\begin{equation}
\partial _{t}^{2}\mu +{\frac{2\lambda \,\omega \,\sin \left( 2\,\omega
\,t\right) }{\omega +\lambda \,\cos \left( 2\,\omega \,t\right) }}\partial
_{t}\mu +{\frac{{\omega }^{3}-3\,\omega \,{\lambda }^{2}-\left( {\omega }%
^{2}\lambda +{\lambda }^{3}\right) \cos \left( 2\,\omega \,t\right) }{\omega
+\lambda \,\cos \left( 2\,\omega \,t\right) }}\mu =0  \label{inceeq}
\end{equation}%
that corresponds to the characteristic equation (see (\ref{charac})-(\ref%
{charac2}) and Lemma  6 of section 2) of the parametric oscillator given by
the Schrodinger equation (\ref{SEsu1})-(\ref{SEsu2}).  To find the solutions of Ince's equation\eqref{inceeq} we apply two practical tools of differential Galois
theory: \emph{Kovacic Algorithm} developed by J. Kovacic in \cite{kov} and 
\emph{Hamiltonian Algebrization} developed by the first author in \cite%
{ac,acmowe}. Here we follow the version of Kovacic Algorithm given in \cite%
{ac,acmowe,kov}.

In this section we find solutions to

\subsection{Case 1 $\lambda =\omega$ }

Ince's equation \eqref{inceeq} becomes to 
\begin{equation}
\partial _{t}^{2}\mu +2\lambda \tan (\lambda t)\partial _{t}\mu -2\mu =0.
\label{cosutan}
\end{equation}%
To solve the characteristic equation \eqref{cosutan} through differential
Galois theory we apply the hamiltonian algebrization process and Kovacic's
algorithm to obtain the solutions as well its differential Galois group.
Afterward, we construct the corresponding propagator associated to this
characteristic equation. We consider $K=\mathbb{C}(\tan \lambda t)$ as the
differential field of the characteristic equation.

We first consider the hamiltonian change of variable $\tau =\tan \lambda t$
obtaining $\alpha =\lambda ^{2}(1+\tau ^{2})^{2}$ and using the Hamiltonian
Algebrization process we get 
\begin{equation}
\partial _{\tau }^{2}\hat{\mu}(\tau )+\frac{4\tau }{1+\tau ^{2}}\partial
_{\tau }\hat{\mu}(\tau )-\frac{2}{(1+\tau ^{2})^{2}}\hat{\mu}(\tau )=0,
\label{algeb}
\end{equation}%
i.e., the algebraic form of the characteristic equation, being $\widehat{K}=%
\mathbb{C}(\tau )$ its differential field which is isomorphic to $K=\mathbb{C%
}(\tan \lambda t)$. In order to apply the Kovacic algorithm we should reduce
the equation \eqref{algeb} through the change of dependent variable 
\begin{equation*}
\hat{\mu}(\tau )=y\exp \left( -\frac{1}{2}\int_{0}^{\tau }\frac{4s}{1+s^{2}}%
ds\right) =\frac{y}{1+\tau ^{2}}.
\end{equation*}%
Thus we obtain the reduced form 
\begin{equation}
\partial _{\tau }^{2}y=ry,\quad r=\frac{2\tau ^{2}+4}{(1+\tau ^{2})^{2}}=%
\frac{2}{1+\tau ^{2}}+\frac{2}{(1+\tau ^{2})^{2}},  \label{redformod2}
\end{equation}%
where the Picard-Vessiot extension is the same as in \eqref{algeb}, denoted
by $\widehat{L}$, and for instance the differential Galois group will be the
same for both equations due to the fact that their coefficients are rational
functions in $\tau $, i.e., $\widehat{K}=\mathbb{C}(\tau )$ and their
solutions are linked by a rational function in $\tau $ which does not
changes the Picard-Vessiot extension. \newline

Applying Kovacic's algorithm we see that $\Gamma =\{i,-i,\infty \}$, $\circ
r_{i}=\circ r_{-i}=\circ r_{\infty }=2$, which implies that equation %
\eqref{redformod2} can fall in cases 1, 2, 3 or 4 of the algorithm. We begin
analyzing case one.\newline

\noindent \emph{Case 1}. According to step 1, we check conditions ($c_{2}$)
and $(\infty _{2})$ to obtain $b_{i}=b_{-i}=-\frac{1}{2}$ and $b_{\infty }=2$%
. In this way, $\alpha _{i}^{\pm }=\alpha _{-i}^{\pm }=\frac{1\pm i}{2}$, $%
\alpha _{\infty }^{\pm }=\frac{1\pm 3}{2}$, $\sqrt{r}_{i}=\sqrt{r}_{-i}=%
\sqrt{r}_{\infty }=0$. By step 2 we obtain $D=\{1\}$ through two different
options: $1=\alpha _{\infty }^{+}-\alpha _{i}^{+}-\alpha _{-i}^{-}$ and $%
1=\alpha _{\infty }^{+}-\alpha _{i}^{-}-\alpha _{-i}^{+}$; for instance we
have two possibilities for $\omega $: $\omega _{1}=\frac{1}{\tau -i}-\frac{1%
}{\tau +i}$ and $\omega _{2}=\frac{1}{\tau +i}-\frac{1}{\tau -i}$, that is, $%
\omega _{1}^{2}=\omega _{2}^{2}$. Following step 3 we see that there does
not exist a polynomial $P_{1}(\tau )=\tau +a_{0}$ corresponding to this part
of the algorithm. In this way, we should go to the case 2 of Kovacic
algorithm due to $\mathrm{DGal}(\widehat{L}/\widehat{K})$ not being a
subgroup of the triangular group of $\mathrm{SL}(2,\mathbb{C})$.\newline

\noindent \emph{Case 2}. According to step 1, we check conditions ($c_{2}$)
and $(\infty _{2})$ to obtain $b_{i}=b_{-i}=-\frac{1}{2}$ and $b_{\infty }=2$%
. In this way, $E_{i}=E_{-i}=\{2\}$, $E_{\infty }=\{-4,2,8\}$. By step 2 we
obtain $D=\{2\}$ only for $e_{\infty }=8$. In this way, we obtain $\theta =%
\frac{1}{\tau -i}+\frac{1}{\tau +i}$. Following step 3 we obtain the
polynomial $P_{2}(\tau )=\tau ^{2}-1$, which leads us to obtain $\phi =\frac{%
1}{\tau -i}+\frac{1}{\tau +i}+\frac{2\tau }{\tau ^{2}-1}$. In this way,
solving the algebraic equation 
\begin{equation*}
\omega ^{2}+\phi \omega +\frac{1}{2}(\partial _{\tau }\phi +\phi ^{2}-2r)=0,
\end{equation*}%
we obtain two solutions for $\omega $: 
\begin{equation*}
\omega _{-}=2\,{\frac{{\tau }^{2}-\tau +1}{{\tau }^{3}-{\tau }^{2}+\tau -1}}%
,\quad \omega _{+}=2\,{\frac{{\tau }^{2}+\tau +1}{{\tau }^{3}+{\tau }%
^{2}+\tau +1}}.
\end{equation*}%
In this way, due to $y=e^{\int \omega d\tau }$, we have the general solution
of \eqref{redformod2}: 
\begin{equation*}
y=C_{1}e^{-\arctan \tau }(\tau -1)\sqrt{1+\tau ^{2}}+C_{2}e^{\arctan \tau
}(\tau +1)\sqrt{1+\tau ^{2}},
\end{equation*}%
for instance $\mathrm{DGal}(\widehat{L}/\widehat{K})=\mathbb{D}_{\infty }$,
that is, the infinite dihedral group. Now, the general solution for equation %
\eqref{algeb} is given by 
\begin{equation*}
\widehat{\mu }(\tau )=C_{1}{\frac{e^{-\arctan \tau }(\tau -1)}{\sqrt{1+\tau
^{2}}}}+C_{2}{\frac{e^{\arctan \tau }(\tau +1)}{\sqrt{1+\tau ^{2}}}};
\end{equation*}%
for instance the differential Galois group for the algebrized characteristic
equation \eqref{algeb} is also the dihedral infinite group $\mathbb{D}%
_{\infty }$. Recalling that $\tau =\tan \lambda t$, we get the general
solution of the characteristic equation 
\begin{equation*}
\mu (t)=C_{1}e^{-\lambda t}(\sin \lambda t-\cos \lambda t)+C_{2}e^{\lambda
t}(\sin \lambda t+\cos \lambda t),
\end{equation*}%
which can also be written as 
\begin{equation*}
\mu (t)=(C_{1}+C_{2})(\sinh \lambda t\cos \lambda t+\cosh \lambda t\sin
\lambda t
\end{equation*}%
\begin{equation*}
+(C_{2}-C_{1})\sinh \lambda t\sin \lambda t+\cosh \lambda t\cos \lambda t,
\end{equation*}%
and its differential Galois group is also the dihedral infinite group, i.e.,$%
\mathrm{DGal}(L/K)=\mathbb{D}_{\infty }$. Now, we find $\mu _{0}(t)$ and $%
\mu _{1}(t)$ satisfying the conditions of Lemma \ref{lemma1} 
\begin{equation*}
\mu _{0}(t)=\sinh \lambda t\cos \lambda +\cosh \lambda t\sin \lambda t,\quad
C_{1}=C_{2}=\frac{1}{2},
\end{equation*}%
\begin{equation*}
\mu _{1}(t)=\sinh \lambda t\sin \lambda t+\cosh \lambda t\cos \lambda
t,\quad -C_{1}=C_{2}=\frac{1}{2},
\end{equation*}%
which for $\lambda =1$ corresponds to the solutions given in \cite{cosuin}.
Therefore, by Theorem \ref{Galois_LSE} the Schr\"{o}dinger equation (\ref%
{SEsu1})-(\ref{SEsu2}) is Galois integrable and its propagator by Lemma 7 is
given by

\begin{eqnarray}
G\left( x,y,t\right)  &=&\frac{1}{\sqrt{2\pi i(\cos \lambda t\sinh \lambda
t+\sin \lambda t\cosh \lambda t)}}  \label{propagator} \\
&&\times \exp \left[ \frac{(x^{2}-y^{2})\sin \lambda t\sinh \lambda
t+2xy-(x^{2}-y^{2})\cos \lambda t\cosh \lambda t}{2i(\cos \lambda t\sinh
\lambda t+\sin \lambda t\cosh \lambda t)}\right]   \notag
\end{eqnarray}%
and it is Liouvillian.

On the other hand, after
the Hamiltonian Algebrization process over the equation \eqref{cosutan}, we use the command \texttt{kovacicsols%
} over the equation \eqref{algeb} to obtain 
\begin{equation*}
\left[ \frac{\left( \frac{i+\tau }{-\tau +i}\right) ^{\frac{i}{2}}(1+\tau )}{%
\sqrt{1+\tau ^{2}}},\frac{\left( \frac{-\tau +i}{i+\tau }\right) ^{\frac{i}{2%
}}(-1+\tau )}{\sqrt{1+\tau ^{2}}}\right] ,
\end{equation*}%
therefore we can write the general solution as 
\begin{equation}
\hat{\mu}=C_{1}\frac{\left( \frac{i+\tau }{-\tau +i}\right) ^{\frac{i}{2}%
}(1+\tau )}{\sqrt{1+\tau ^{2}}}+C_{2}\frac{\left( \frac{-\tau +i}{i+\tau }%
\right) ^{\frac{i}{2}}(-1+\tau )}{\sqrt{1+\tau ^{2}}}.  \label{solKov}
\end{equation}

Recalling that 
\begin{eqnarray*}
\cosh z+\sinh z &=&e^{z},\quad \cos \left( \arctan z\right) =-\frac{1}{\sqrt{%
1+z^{2}}},\quad  \\
\sin \left( \arctan z\right)  &=&\frac{z}{\sqrt{1+z^{2}}},\quad \arctan z=%
\frac{1}{2}i(\ln (1-iz)-\ln (1+iz)),
\end{eqnarray*}

for instance equation (\ref{solKov}) becomes 
\begin{eqnarray*}
\hat{\mu}(\tau ) &=&\left( C_{1}e^{\arctan \tau }-C_{2}e^{-\arctan \tau
}\right) \cos (\arctan \tau ) \\
&&+\left( C_{1}e^{\arctan \tau }+C_{2}e^{-\arctan \tau }\right) \sin
(\arctan \tau ),
\end{eqnarray*}

Now, recalling $\tau =\tan \lambda t,$ we obtain%
\begin{equation*}
{\mu }(t)=\left( C_{1}e^{\lambda t}-C_{2}e^{-\lambda t}\right) \cos \lambda
t+\left( C_{1}e^{\lambda t}+C_{2}e^{-\lambda t}\right) \sin \lambda t.
\end{equation*}

Finally, from the conditions of $\mu _{0}$ and $\mu _{1}$ stated in Lemma %
\ref{lemma1} ${\mu }_{0}(t)$ and $\mu _{1}(t)$ become 
\begin{equation*}
{\mu }_{0}(t)=\sinh \lambda t\cos \lambda t+\cosh \lambda t\sin \lambda t,
\end{equation*}%
\begin{equation*}
{\mu }_{1}(t)=\cosh \lambda t\cos \lambda t+\sinh \lambda t\sin \lambda t,
\end{equation*}%
that are the same solutions found using Kovacic Algorithm by hand.

\subsection{Case 2 $\lambda \neq \omega $ }

For the Ince's equation \eqref{inceeq} using Hamiltonian Algebrization
procedure and Kovacic Algorithm. By properties of double angle, we can write
the equation \eqref{inceeq} in terms of $\tan (\omega t)$, for instance, we
can consider as its differential field to $K=\mathbb{C}(\tan \omega t)$.
After the Hamiltonian change of variable $\tau =\tan \omega t$ we obtain $%
\alpha =\omega ^{2}(1+\tau ^{2})^{2}$ and by Hamiltonian Algebrization
procedure we get as algebraic form of \eqref{inceeq} to 
\begin{equation*}
\begin{array}{l}
\partial _{\tau }^{2}\widehat{\mu }+\varphi _{1}(\tau )\partial _{\tau }%
\widehat{\mu }+\varphi _{0}(\tau )\widehat{\mu }=0,\quad \varphi _{1}(\tau )=%
{\frac{2(\lambda -\omega )\tau ^{3}-(3\lambda +\omega )\tau }{(1+\tau
^{2})\left( (\lambda -\omega )\tau ^{2}-\lambda -\omega \right) }}, \\ 
\\ 
\varphi _{0}(\tau )=-{\frac{\left( {\omega }^{3}-3\,\omega \,{\lambda }^{2}+{%
\omega }^{2}\lambda +{\lambda }^{3}\right) {\tau }^{2}+{\omega }%
^{3}-3\,\omega \,{\ \lambda }^{2}-{\omega }^{2}\lambda -{\lambda }^{3}}{%
\left( 1+{\tau }^{2}\right) ^{2}\left( \left( \lambda -\omega \right) {\tau }%
^{2}-\lambda -\omega \right) {\omega }^{2}}}.%
\end{array}%
\end{equation*}%
We can eliminate one parameter through the change $\lambda =\kappa \omega $,
thus, our algebraic form becomes in 
\begin{equation}
\begin{array}{l}
\partial _{\tau }^{2}\widehat{\mu }+\varphi _{1}(\tau )\partial _{\tau }%
\widehat{\mu }+\varphi _{0}(\tau )\widehat{\mu }=0,\quad \varphi _{1}(\tau )=%
{\frac{2(\kappa -1)\tau ^{3}-(3\kappa +1)\tau }{(1+\tau ^{2})\left( (\kappa
-1)\tau ^{2}-\kappa -1\right) }}, \\ 
\\ 
\varphi _{0}(\tau )=-{\frac{(1-3\kappa ^{2}+\kappa +\kappa ^{3})\tau
^{2}+1-3\kappa ^{2}-\kappa -\kappa ^{3}}{(1+\tau ^{2})^{2}\left( (\kappa
-1)\tau ^{2}-\kappa -1\right) }},\quad \kappa \neq 1.%
\end{array}
\label{algformincgen}
\end{equation}%
Following the same steps for $\lambda =\omega $ in Kovacic Algorithm, by
Proposition \ref{prop:almp} we transform the equation \eqref{algformincgen}
in

\begin{enumerate}
\item 
\begin{equation}
\begin{array}{l}
\partial _{\tau }^{2}y=ry,\quad \widehat{\mu }(\tau )=y{\frac{\sqrt{(\kappa
-1){\tau }^{2}-1-\kappa }}{{1+{\tau }^{2}}}} \\ 
\\ 
r={\frac{\left( \left( -4{\kappa }^{3}-4\kappa +7\kappa ^{2}+{\kappa }%
^{4}\right) {\tau }^{4}+\left( 10\kappa ^{2}-2{\kappa }^{4}\right) {\ \tau }%
^{2}+4\kappa +7\kappa ^{2}+4{\kappa }^{3}+{\kappa }^{4}\right) }{\left( 1+{%
\tau }^{2}\right) ^{2}\left( \left( -1+\kappa \right) {\tau }^{2}-1-\kappa
\right) ^{2}}}.%
\end{array}
\label{alformincred}
\end{equation}%
We see that $\Gamma =\left\{ i,-i,\sqrt{\frac{\kappa +1}{\kappa -1}},-\sqrt{%
\frac{\kappa +1}{\kappa -1}},\infty \right\} $, $\circ r_{c}=2,\forall c\in
\Gamma $, which implies that equation \eqref{alformincred} could fall in
cases 1, 2, 3 or 4 of the algorithm. We discard the case one for similar
reasons when $\lambda =\omega $. 

In the same way as in the case $\lambda =\omega $, by step two and step
three we obtain the general solution of \eqref{alformincred}: 
\begin{equation}
y=C_{1}{e^{-\kappa \arctan \tau }(\tau -1)\sqrt{1+\tau ^{2}}\over \sqrt{(\kappa-1)\tau^2-\kappa-1}}+C_{2}{e^{\kappa
\arctan \tau }(\tau +1)\sqrt{1+\tau ^{2}}\over \sqrt{(\kappa-1)\tau^2-\kappa-1}},  \label{solgenince}
\end{equation}%
for instance $\mathrm{DGal}(\widehat{L}/\widehat{K})=\mathbb{D}_{\infty }$,
that is, the infinite dihedral group for any $\kappa \neq 0$. Now, the
general solution for equation \eqref{algformincgen} is given by 
\begin{equation}
\widehat{\mu }(\tau )=C_{1}{\frac{e^{-\kappa \arctan \tau }(\tau -1)}{\sqrt{%
1+\tau ^{2}}}}+C_{2}{\frac{e^{\kappa \arctan \tau }(\tau +1)}{\sqrt{1+\tau
^{2}}}};  \label{solgenince2}
\end{equation}%
for instance the differential Galois group for the algebrized characteristic
equation \eqref{algformincgen} is also the dihedral infinite group $\mathbb{D%
}_{\infty }$ for any value of $\kappa \neq 0$. Recalling that $\tau =\tan
\lambda t$ and $\lambda =\kappa \omega $, we get the general solution of the
characteristic equation 
\begin{equation*}
\mu (t)=C_{1}e^{-\lambda t}(\sin \omega t-\cos \omega t)+C_{2}e^{\lambda
t}(\sin \omega t+\cos \omega t),
\end{equation*}%
which can also be written as 
\begin{equation*}
\mu (t)=(C_{1}+C_{2})(\sinh \lambda t\cos \omega t+\cosh \lambda t\sin
\omega t
\end{equation*}%
\begin{equation*}
+(C_{2}-C_{1})\sinh \lambda t\sin \omega t+\cosh \lambda t\cos \omega t,
\end{equation*}%
and its differential Galois group is also the dihedral infinite group, i.e.,$%
\mathrm{DGal}(L/K)=\mathbb{D}_{\infty }$. Now, we find $\mu _{0}(t)$ and $%
\mu _{1}(t)$ satisfying the conditions of Lemma \ref{lemma1} 
\begin{equation*}
\mu _{0}(t)=\sinh \lambda t\cos \omega t+\cosh \lambda t\sin \omega t,\quad
C_{1}=C_{2}=\frac{1}{2},
\end{equation*}%
\begin{equation*}
\mu _{1}(t)=\sinh \lambda t\sin \omega t+\cosh \lambda t\cos \omega t,\quad
-C_{1}=C_{2}=\frac{1}{2},
\end{equation*}%
which for $\lambda =1$ corresponds to the solutions given in \cite{cosuin}.
Therefore, by Theorem \ref{Galois_LSE} the Schr\"{o}dinger equation (\ref%
{SEsu1})-(\ref{SEsu2}) is Galois integrable and its propagator by Lemma 7 is
given by

\begin{eqnarray}
G\left( x,y,t\right)  &=&\frac{1}{\sqrt{2\pi i(\cos \omega t\sinh \lambda
t+\sin \omega t\cosh \lambda t)}}  \label{propagator} \\
&&\times \exp \left[ \frac{(x^{2}-y^{2})\sin \omega t\sinh \lambda
t+2xy-(x^{2}-y^{2})\cos \omega t\cosh \lambda t}{2i(\cos \omega t\sinh
\lambda t+\sin \omega t\cosh \lambda t)}\right]   \notag
\end{eqnarray}%
and it's Liouvillian.

Now we can see that applying \texttt{kovacicsols} we obtain 
\begin{equation*}
\left[ \frac{\left( \frac{i+\tau }{-\tau +i}\right) ^{\frac{i}{2}\kappa
}(1+\tau )}{\sqrt{1+\tau ^{2}}},\frac{\left( \frac{-\tau +i}{i+\tau }\right)
^{\frac{i}{2}\kappa }(-1+\tau )}{\sqrt{1+\tau ^{2}}}\right] ,
\end{equation*}%
therefore we can write the general solution as 
\begin{equation}
\hat{\mu}=C_{1}\frac{\left( \frac{i+\tau }{-\tau +i}\right) ^{\frac{i}{2}%
\kappa }(1+\tau )}{\sqrt{1+\tau ^{2}}}+C_{2}\frac{\left( \frac{-\tau +i}{%
i+\tau }\right) ^{\frac{i}{2}\kappa }(-1+\tau )}{\sqrt{1+\tau ^{2}}},
\label{solKovgen}
\end{equation}%
due to $\kappa \arctan \tau =\kappa \frac{1}{2}i(\ln (1-i\tau )-\ln (1+i\tau
))$, the equation \eqref{solKovgen} becomes to 
\begin{equation*}
\tilde{\mu}=C_{1}{\frac{(\tau +1)e^{\kappa \arctan \tau }}{\sqrt{1+\tau ^{2}}%
}}+C_{2}{\frac{(\tau -1)e^{-\kappa \arctan \tau }}{\sqrt{1+\tau ^{2}}}},
\end{equation*}%
due to $\tau =\tan \omega t$ and $\kappa \omega =\lambda $, we arrive to 
\begin{equation*}
\mu =C_{1}(\sin \omega t+\cos \omega t)e^{\lambda t}+C_{2}(\sin \omega
t-\cos \omega t)e^{-\lambda t}.
\end{equation*}%
Finally, we can get the solutions $\mu _{0}$ and $\mu _{1}$ in the most
general way than in \cite{cosuin}, it is enough take $C_{1}=C2=\frac{1}{2}$
for $\mu _{0}$ and $C_{1}=-C_{2}=\frac{1}{2}$ for $\mu _{1}$, thus we obtain
such solutions for \eqref{inceeq} 
\begin{equation*}
\mu _{0}=\sin \omega t\cosh \lambda t+\cos \omega t\sinh \lambda t
\end{equation*}%
\begin{equation*}
\mu _{1}=\sin \omega t\sinh \lambda t+cos\omega t\cosh \lambda t.
\end{equation*}
\end{enumerate}

Thus, we have proven the following result.

\begin{theorem}
The fundamental solution of Ince's characteristic equation \eqref{inceeq} is
given by 
\begin{equation*}
\mu =C_{1}(\sin \omega t+\cos \omega t)e^{\lambda t}+C_{2}(\sin \omega
t-\cos \omega t)e^{-\lambda t}.
\end{equation*}%
and the propagator for the Schr\"{o}dinger equation (\ref{SEsu1})-(\ref%
{SEsu2}) is Galois integrable and its propagator by Lemma 7 is given by 
\begin{eqnarray}
G\left( x,y,t\right)  &=&\frac{1}{\sqrt{2\pi i(\cos \omega t\sinh \lambda
t+\sin \omega t\cosh \lambda t)}}  \label{propagator} \\
&&\times \exp \left[ \frac{(x^{2}-y^{2})\sin \omega t\sinh \lambda
t+2xy-(x^{2}-y^{2})\cos \omega t\cosh \lambda t}{2i(\cos \omega t\sinh
\lambda t+\sin \omega t\cosh \lambda t)}\right]   \notag
\end{eqnarray}%
and it is Liouvillian.
\end{theorem}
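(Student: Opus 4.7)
The plan is to handle both cases $\lambda=\omega$ and $\lambda\neq\omega$ in a unified way by setting $\lambda=\kappa\omega$ and invoking the Hamiltonian algebrization procedure of Section \ref{thback} with the change of variable $\tau=\tan(\omega t)$, for which $\alpha=\omega^{2}(1+\tau^{2})^{2}$. This sends the Ince equation \eqref{inceeq}, whose coefficient field is $K=\mathbb{C}(\tan\omega t)$, to the equation \eqref{algformincgen} with coefficients in $\widehat K=\mathbb{C}(\tau)$. Because the algebrization is isogaloisian, the differential Galois groups agree and Kovacic's algorithm becomes directly applicable to the algebrized equation.

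Next I would reduce \eqref{algformincgen} to its Schwarzian normal form $\partial_{\tau}^{2}y=ry$ via the standard substitution $\widehat\mu=y\exp(-\tfrac12\int\varphi_{1}\,d\tau)$, producing the rational $r$ displayed in \eqref{alformincred}. The singular set is $\Gamma=\{\pm i,\pm\sqrt{(\kappa+1)/(\kappa-1)},\infty\}$ with every pole of order $2$, so a priori any of the four Kovacic cases is open. I would first run Case 1: compute the local exponents $\alpha_{c}^{\pm}$, list the candidate $\omega_{\pm}$, and verify, as in the $\lambda=\omega$ subcase, that no monic polynomial $P_{n}$ of the required degree exists. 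This discards the triangular possibility.

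The main computation is Case 2. Using conditions $(c_{2})$ and $(\infty_{2})$ I would determine $E_{c}$ and $E_{\infty}$, extract the unique $n\in D$, and form $\theta=\tfrac12\sum_{c\in\Gamma'}e_{c}/(\tau-c)$. Finding a monic $P_{n}$ of degree $n$ satisfying the third-order relation of step 3 is the technical heart of the proof, and where I expect the main obstacle to lie: the success of this step and the fact that $P_{n}$ depends polynomially and uniformly on $\kappa$ both hinge on the reduction $\lambda=\kappa\omega$, without which the case analysis would fragment into incompatible subcases. Once $P_{n}$ is in hand, $\phi=\theta+\partial_{\tau}P_{n}/P_{n}$ and the quadratic $\omega^{2}+\phi\omega+\tfrac12(\partial_{\tau}\phi+\phi^{2}-2r)=0$ produce $y=e^{\int\omega\,d\tau}$ and hence the general solution \eqref{solgenince}. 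Multiplying by the exponential factor removed in the reduction step gives \eqref{solgenince2}; undoing the algebrization via $\tau=\tan(\omega t)$ and $\lambda=\kappa\omega$ yields the claimed $\mu=C_{1}(\sin\omega t+\cos\omega t)e^{\lambda t}+C_{2}(\sin\omega t-\cos\omega t)e^{-\lambda t}$. The resulting differential Galois group over $\widehat K$ is the infinite dihedral group $\mathbb{D}_{\infty}$, which is virtually solvable.

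To obtain the propagator I would specialize the constants to match the initial conditions of Lemma \ref{lemma1}: $C_{1}=C_{2}=\tfrac12$ produces $\mu_{0}(t)=\sinh\lambda t\cos\omega t+\cosh\lambda t\sin\omega t$ and $-C_{1}=C_{2}=\tfrac12$ produces $\mu_{1}(t)=\sinh\lambda t\sin\omega t+\cosh\lambda t\cos\omega t$; a direct check confirms $\mu_{0}(0)=0$, $\mu_{1}(0)\neq 0$, and the derivative conditions required in Lemma \ref{lemma1}. Substituting $\mu_{0}$ and $\mu_{1}$ into \eqref{A0}--\eqref{C0} and then into Lemma \ref{lemma2} yields the explicit Liouvillian propagator in the stated form. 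Finally, Galois integrability of the Schr\"odinger equation \eqref{SEsu1}--\eqref{SEsu2} follows from Theorem \ref{Galois_LSE} together with virtual solvability of $\mathbb{D}_{\infty}$ via Theorem \ref{kolchin}.
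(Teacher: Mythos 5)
Your proposal follows essentially the same route as the paper: Hamiltonian algebrization with $\tau=\tan\omega t$, reduction to the normal form $\partial_\tau^2 y=ry$, discarding Kovacic Case 1 and running Case 2 to get $\mathrm{DGal}(\widehat L/\widehat K)=\mathbb{D}_{\infty}$, then undoing the changes of variable and feeding $\mu_0,\mu_1$ (with $C_1=C_2=\tfrac12$ and $-C_1=C_2=\tfrac12$) into Lemmas \ref{lemma1} and \ref{lemma2} and concluding via Theorem \ref{Galois_LSE}. The only real divergence is organizational: you fold $\lambda=\omega$ into the $\kappa$-parametrized computation, whereas the paper treats it separately because at $\kappa=1$ the singular set of \eqref{alformincred} degenerates (the poles $\pm\sqrt{(\kappa+1)/(\kappa-1)}$ escape to infinity), so a uniform-in-$\kappa$ run of the algorithm needs that caveat.
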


\section{Toy Examples}\label{toyexamples}
In this section we illustrate our Galoisian approach through some elementary examples. 
We use the Galoisian approach to LSE, Theorem \ref%
{Galois_LSE}, to generalize examples introduced in \cite{LaSu} as well as to
introduce toy examples. The starting point is the knowledge of the
integrability, in the Picard-Vessiot sense, of the characteristic equation,
or equivalently the existence of an algebraic solution over its differential
field of its associated Riccati equation. In general, to obtain the solutions of the
characteristic equations we can Kovacic Algorithm and Hamiltonian Algebrization procedure as in Section \ref{newpropagators}.

\subsubsection{Toy models inspired by integrable Riccati equations}

\begin{enumerate}
\item From the Riccati equation 
\begin{equation*}
\frac{d\alpha }{dt}+(\cos t)\alpha ^{2}=0
\end{equation*}
we can construct the propagator for the Schr\"{o}dinger equation 
\begin{equation*}
i\frac{\partial \psi }{\partial t}=-\frac{\cos t}{4}\frac{\partial ^{2}\psi 
}{\partial x^{2}}-id(t)\psi .
\end{equation*}
It has the characteristic equation $\mu (t)=2\sin t$ and the propagator is
given by (\ref{newgreen}) with 
\begin{equation*}
\alpha (t)=\frac{1}{\sin t},\qquad \beta (t)=-\frac{1}{2\sin t},\qquad
\gamma (t)=\frac{1}{16\sin t}.
\end{equation*}

\item From the Riccati equation 
\begin{equation*}
\frac{d\alpha }{dt}+2(t+a_{0})\alpha ^{2}=0
\end{equation*}
we can construct the propagator for the Schr\"{o}dinger equation 
\begin{equation*}
i\frac{\partial \psi }{\partial t}=-\left( \frac{t+a_{0}}{2}\right) \frac{%
\partial ^{2}\psi }{\partial x^{2}}-id(t)\psi ,
\end{equation*}
which has the characteristic equation $\mu (t)=t^{2}+2a_{0}t$ and the green
function given by (\ref{newgreen}) with 
\begin{equation*}
\alpha (t)=\frac{1}{t^{2}+2a_{0}t},\qquad \beta (t)=-\frac{1}{t^{2}+2a_{0}t}%
,\qquad \gamma (t)=\frac{1}{4\left( t^{2}+2a_{0}t\right) }.
\end{equation*}

\item From the Riccati equation 
\begin{equation*}
\frac{d\alpha }{dt}+2\cos t+\frac{1}{\cos t}\alpha ^{2}=0
\end{equation*}
we can construct the propagator for the Schr\"{o}dinger equation 
\begin{equation*}
i\frac{\partial \psi }{\partial t}=-\left( \frac{1}{4\cos t}\right) \frac{%
\partial ^{2}\psi }{\partial x^{2}}+2(\cos t)x^{2}\psi -id(t)\psi ,
\end{equation*}
which has the characteristic equation $\mu (t)=\frac{1}{2}\sin t$ and the
green function is given by (\ref{newgreen}) with 
\begin{equation*}
\alpha (t)=\frac{\cos ^{2}t}{\sin t},\qquad \beta (t)=-\frac{2}{\sin t},
\end{equation*}
\begin{eqnarray}
\gamma (t) &=&\frac{4}{\sin t\cos ^{2}t}-16\tan x\sec x+16\log \left( \cos 
\frac{x}{2}-\sin \frac{x}{2}\right) \\
&&-16\log \left( \sin \frac{x}{2}+\cos \frac{x}{2}\right) .
\end{eqnarray}

\item From the Riccati equation 
\begin{equation*}
\frac{d\alpha }{dt}+2(\tan t)\alpha +\frac{1}{4\cos t}\alpha ^{2}=0
\end{equation*}
we can construct the propagator for the Schr\"{o}dinger equation 
\begin{equation*}
i\frac{\partial \psi }{\partial t}=-\left( \frac{1}{4\cos t}\right) \frac{%
\partial ^{2}\psi }{\partial x^{2}}-i(\tan t)x\frac{\partial \psi }{\partial
x}-id(t)\psi ,
\end{equation*}
which has the characteristic equation $\mu (t)=\frac{1}{2}\sin t$ and the
green function given by (\ref{newgreen}) with 
\begin{equation*}
\alpha (t)=\frac{\cos ^{2}t}{\sin t},\qquad \beta (t)=-\frac{2}{\sin
t\left\vert \cos t\right\vert },\qquad \gamma (t)=\frac{1}{\sin t\cos ^{4}t}.
\end{equation*}

\item From the Riccati equation 
\begin{equation*}
\frac{d\alpha }{dt}-ae^{\lambda t}-ate^{\lambda t}\alpha -\alpha ^{2}=0
\end{equation*}
we deduce that for the Schr\"{o}dinger equation 
\begin{equation*}
i\frac{\partial \psi }{\partial t}=-\frac{1}{4}\frac{\partial ^{2}\psi }{%
\partial x^{2}}-ae^{\lambda t}x^{2}\psi +\frac{ate^{\lambda t}}{2}x\frac{%
\partial \psi }{\partial x}-id(t)\psi ,
\end{equation*}
the propagator is given by (\ref{newgreen}) where 
\begin{equation*}
\alpha (t)=-\frac{1}{t},
\end{equation*}

\begin{equation*}
\beta (t)=-\frac{1}{t}\exp \left( \frac{ae^{\lambda t}(\lambda t-1)}{%
2\lambda ^{2}}+\frac{a}{2\lambda ^{2}}\right) ,
\end{equation*}

\begin{equation*}
\gamma (t)=-\frac{1}{4t}\exp \left( -\frac{ae^{\lambda t}(\lambda t-1)}{%
\lambda ^{2}}-\frac{a}{\lambda ^{2}}\right) +\frac{a}{4\lambda }\left(
e^{\lambda t}-1\right) .
\end{equation*}
\end{enumerate}

\subsubsection{Characteristic equation $\partial_t^2\protect\mu +t^{n}%
\protect\mu =0$}

We consider in this equation as differential field to $K=\mathbb{C}(t)$.
This is a generalization of the case $n=1$, which was presented in \cite%
{LaSu}. However, the case $n=1$ does not correspond to Galoisian
integrability of the characteristic equation: the solutions are not
Liouvillian due to the fact that they are Airy functions. Now, we study the
integrability, in Galoisian sense, of this equation through Kovacic
Algorithm. As in \cite{stpr}, we obtain three conditions for $n$ to get
virtual solvability of the differential Galois group. It follows the
construction of the propagators related with these integrability conditions.

\begin{enumerate}
\item Let's consider the characteristic equation $\partial _{t}^{2}\mu +\mu
=0$ ($n=0$). A basis of solutions is given by $\mathcal{B}=\{\sin t,\cos t\}$%
. Thus, the Picard-Vessiot extension is given by $L=\mathbb{C}(t,e^{it})$
and for instance the differential Galois group is $\mathrm{DGal}(L/K)=%
\mathbb{G}_{m}$, that is, the diagonal group of $\mathrm{SL}(2,\mathbb{C})$.
By Proposition \ref{prop:almp}, through the change of variable $\alpha
=\partial _{t}\mu /\mu $, we obtain the Riccati equation 
\eqref{Riccati
equation}, where $a=1/4$, $b=1$ and $c=0$. Considering $\mu _{0}(t)=\lambda
_{1}\sin t$ and $\mu _{1}(t)=\lambda _{2}\cos t$, we see that the conditions
of Lemma \ref{lemma1} are satisfied when $\lambda _{1}=1/2$. Furthermore, $%
\lambda _{2}$ must be $2$ to get $W(\mu _{0},\mu _{1})=1$. In this way, by
Theorem \ref{Galois_LSE} the Schr\"{o}dinger equation is Galois integrable.

\item Let's consider the characteristic equation $\partial _{t}^{2}\mu +\mu
/t^{2}=0$ ($n=-2$). A basis of solutions is given by $\mathcal{B}%
=\{t^{m+1},t^{-m}\}$, being $m=(-1\pm \sqrt{5})/2$. Since $m\notin \mathbb{Q}
$, the Picard-Vessiot extension is $L=\mathbb{C}(t,t^{m})$ and therefore $%
\mathrm{DGal}(L/K)=\mathbb{G}_{m}$. In \cite{ac} there is a complete study
of the Galoisian structure of this equation. Thus, by Theorem \ref%
{Galois_LSE} the Schr\"{o}dinger equation is Galois integrable.

\item The characteristic equation $\partial _{t}^{2}\mu +\mu /t^{4}=0$ ($n=-4
$). A basis of solutions is given by $\mathcal{B}=\{t\cos (1/t),t\sin (1/t)\}
$. Thus, the Picard-Vessiot extension is $L=\mathbb{C}(t,e^{i/t})$ and
therefore $\mathrm{DGal}(L/K)=\mathbb{G}_{m}$. Here, initial conditions for
the characteristic equation are satisfied when $t\rightarrow 0^{+}$, that
is, $\mu _{0}(t)\rightarrow 0$ when $t\rightarrow 0^{+}$. For instace, by
Theorem \ref{Galois_LSE} the Schr\"{o}dinger equation is Galois integrable.
\end{enumerate}

\section*{Final Remarks}
This paper is an starting point to study the integrability of partial
differential equations in a more general sense through differential Galois
theory. With this approach we studied the linear Schr\"{o}dinger equation
corresponding to a generalized (quadratic) harmonic oscillator, where the main result is the obtaining of the general solution of the Ince's differential equation and for instance the Liouvillian propagator of a degenerate parametric oscillator, which generalizes the particular results obtained in \cite{cosuin,LaSu}. Although there are a plenty of papers
concerning to explicit solutions and harmonic oscillator (see \cite{GoVi}),
we recall that differential Galois theory can provide the Liouvillian
solutions of characteristic equations without previous knowledge of such
equations. This is different to construct the explicit propagators knowing
apriory the solutions of the Riccatti or characteristic equation, which can
open other possibilities to study propagator with special functions as
characteristic equations, for example, Heun equation.

\section*{Acknowledgments}

The first author is partially supported by the MICIIN/FEDER grant number
MTM2009--06973, by the Generalitat de Catalunya grant number 2009SGR859 and
by DIDI -- Universidad del Norte. The second author acknowledges being a
recipient of Becas Iberoamericanas, jovenes profesores e invetigadores
Santander Universidades during 2012. The second author greatly appreciates
the support by University of Puerto Rico and Universitat de Barcelona during
the academic visit to the latter in Spring 2012 where this project was born.
E. Suazo also was supported by the AMS-Simons Travel Grants, with support
provided by the Simons Foundation to finish this project and the
continuation of others. The authors thank Carles Sim\'o and Juan J.
Morales-Ruiz for their useful comments, suggestions and great hospitality.


\begin{thebibliography}{99}
\bibitem{ac} P. B. Acosta-Humanez, \emph{\textquotedblleft Galoisian
Approach to Supersymmetric Quantum Mechanics. The integrability analysis of
the Schrodinger equation by means of differential Galois
theory\textquotedblright ,} VDM Verlag, Dr M\"{u}ller, Berlin, 2010.

\bibitem{aclamopa} P. B. Acosta-Humanez, J. T. L\'azaro-Ochoa, J. J.
Morales-Ruiz \& Ch. Pantazi, \emph{On the integrability of polynomial fields
in the plane by means of Picard-Vessiot theory}, Preprint 2012
arXiv:1012.4796

\bibitem{acmowe} P. B. Acosta-Humanez, J. J. Morales-Ruiz \& J.-A. Weil, 
\emph{Galoisian Approach to integrability of Schr\"{o}dinger Equation},
Reports on Mathematical Physics, \textbf{67} (2011), \# 3, 305--374.

\bibitem{AnTri95} A. Angelow \& D. A. Trifonov, \emph{Schr\"{o}dinger
covariance states in anisotropic waveguides}, preprint ICTP, IC/95/44
Trieste, Italy, 1995.

\bibitem{AnTri98} A. Angelow, \emph{Light propagation in nonlinear waveguide
and classical two-dimensional oscillator,} Physica A \textbf{256} (1998) 485--498

\bibitem{colosusu} R.~Cordero-Soto, R.~M.~Lopez, E.~Suazo \& S.~K.~Suslov, 
\emph{Propagator of a charged particle with a spin in uniform magnetic and
perpendicular electric fields}, Lett.~Math.~Phys. \textbf{84} (2008)~\#2--3,
159--178.

\bibitem{cosusu1} R.~Cordero-Soto, E.~Suazo \& S.~K.~Suslov, \emph{Models of
damped oscillators in quantum mechanics}, Journal of Physical Mathematics 
\textbf{1} (2009), S090603 (16 pages).

\bibitem{cosusu2} R.~Cordero-Soto, E.~Suazo \& S.~K.~Suslov, \emph{Quantum
integrals of motion for variable quadratic Hamiltonians}, Annals of Physics 
\textbf{325} (2010)~\#10, 1884--1912; see also Preprint 2010
arXiv:0912.4900v9.

\bibitem{cosu} R.~Cordero-Soto \& S.~K.~Suslov, \emph{Time reversal for
modified oscillators}, Theoretical and Mathematical Physics \textbf{162}
(2010)~\#3, 286--316; see also Preprint 2009 arXiv:0808.3149v9.

\bibitem{cosuin} R.~Cordero-Soto \& S.~K.~Suslov, \emph{The degenerate
parametric oscillator and Ince's equation, }J. Phys. A: Math. Theor. \textbf{%
44} (2011)~015101

\bibitem{DoMaMa} V. V. Dodonov, I. A. Malkin \& V. I. Man'ko, \emph{%
Integrals of motion, Green functions, and coherent states of dynamical
systems,} Int. J. Theor. Phys. 14 (1975) 37--54

\bibitem{GoVi} D. G\'{o}mez Vergel \& E. J. S. Villase\~nor, \emph{The
time-dependent quantum harmonic oscillator revisited: applications to
quantum theory} ,Ann. Phys. 324 (2009) 1360--85

\bibitem{ince} E. L. Ince, \emph{Ordinary differential equations}, Dover, New York, 
1956.

\bibitem{Hannay} J. H. Hannay, \emph{Angle variable holonomy in adiabatic
excursion of an integrable Hamiltonian}, J. Phys. A: Math. Gen 18 (1985)
221--30

\bibitem{ka} I. Kaplansky, \emph{An introduction to differential algebra},
Hermann, Paris, 1957.

\bibitem{kol} E. Kolchin,\emph{Differential Algebra and Algebraic Groups},
Academic Press, New York, 1973.

\bibitem{kov} {\ J. Kovacic}, \emph{An Algorithm for Solving Second Order
Linear Homogeneous Differential Equations}, J. Symbolic Computation, \textbf{%
2} (1986), 3--43.

\bibitem{LaSu} N. Lanfear \& S. K. Suslov,\emph{\ The
time-dependent Schroedinger equation, Riccati equation and Airy functions},
Preprint 2009 arXiv:0903.3608

\bibitem{LiFaXi} Li J-F, Fang J-Y, Xiao F-L, Liu X-H and Wang C-Z, \emph{%
Time evolution and squeezing of degenerate and non-degenerate coupled
parametric down-conversion with driving term Commun.} Theor. Phys. (Beijing,
China) 51 ( 2009) 514--518

\bibitem{Lo1} C. F. Lo, \emph{Coherent-state propagator of the generalized
time-dependent parametric oscillator} Europhys. Lett. 24 (1993)\ 319--323

\bibitem{LoSu} R.~M.~Lopez \& S.~K.~Suslov, \emph{The Cauchy problem for a
forced harmonic oscillator}\/, Preprint 2007 arXiv:0707.1902v8

\bibitem{Loui2} W. H. Louisell, \emph{Radiation and Noise in Quantum
Electronics}, McGraw-Hill, New York, 1964.

\bibitem{Loui3} W. H. Louisell \& A. Yariv, \emph{Quantum fluctuations and
noise in parametric processes: I}. Phys. Rev. 124 (1961) 1646--1654.

\bibitem{MaWi} W. Magnus \& S. Winkler, \emph{Hill's Equation}, Dover, New York, 1966.

\bibitem{mara} J. Martinet \& J.P. Ramis,\emph{Th\'{e}orie de Galois
differentielle et resommation}, Computer algebra and differential equations,
117--214, Comput. Math. Appl., Academic Press, London, 1990.

\bibitem{MeCoSu} M.~Meiler, R.~Cordero--Soto, and S.~K.~Suslov, \emph{%
Solution of the Cauchy problem for a time-dependent Schr\"{o}dinger equation,%
} J. Math. Phys. 49 072102

\bibitem{mo} J. J. Morales-Ruiz, \emph{Differential Galois
Theory and Non-integrability of Hamiltonian Systems},
Progress in Mathematics 179, Birkh\"{a}user, 1999.

\bibitem{mora} J. J. Morales-Ruiz and J. P. Ramis, \emph{Galoisian
obstructions to integrability of hamiltonian systems I \& II}, Methods Appl.
Anal. 8 (2001), no. 1, 33--95, 97--111.

\bibitem{niuv} {A. F. Nikiforov \& V. B. Uvarov}, \emph{Special Functions of
Mathematical Physics}, Birkh\"{a}user, Basel, 1988.

\bibitem{Orzag} M. Orszag, \emph{Quantum Optics} , Springer, Heidelberg,
1994.

\bibitem{Shen} Y. R. Shen, \emph{The Principles of Nonlinear Optics}, Wiley, New
York, 1984.

\bibitem{stpr} M. Studzinski \& M. Przybylska, \emph{Darboux Points and
Integrability Analysis of Hamiltonian Systems with Homogeneous Rational
Potentials}, Preprint 2012 arXiv:1205.4395

\bibitem{susu1} E. Suazo \& S.K. Suslov, \emph{Soliton-like Solutions for
Nonlinear Schroedinger Equation with Variable Quadratic Hamiltonians}, to
appear on Journal of Russian Laser Research, 2012.

\bibitem{susu2} E.~Suazo \& S.~K.~Suslov, \emph{Cauchy problem for Schr\"{o}%
dinger equation with variable quadratic Hamiltonians}, in preparation.

\bibitem{su1} S.~K.~Suslov, \emph{Dynamical invariants for variable
quadratic Hamiltonians\/}, Physica Scripta \textbf{81} (2010)~\#5, 055006
(11~pp); see also Preprint 2010 arXiv:1002.0144v6

\bibitem{su2} S. K. Suslov, \emph{On the integrability of nonautonomous
nonlinear Schr\"{o}dinger equations}, Proc. Amer. Math. Soc. 140 (2012) \#9,
3067-3082.

\bibitem{Taka} H. Takahasi, \emph{Information theory of quantum-mechanical
channels Advances in Communication Systems: Theory and Applications (
A. Balakrishnan, ed.}, Vol. 1, 227--310, Academic Press, New York, 1965.

\bibitem{vasi} {\ M. Van der Put \& M. Singer}, \emph{Galois Theory in
Linear Differential Equations}, Springer Verlag, New York, 2003.

\bibitem{YeLeePan} K-H Yeon, K-K Lee, Ch. I. Um, T. F. George \& L. N.
Pandey, \emph{Exact quantum theory of a time-dependent bound Hamiltonian
systems} Phys. Rev. A \textbf{48}, (1993) 2716--2720
\end{thebibliography}
\end{document}